\let\newfloat\newfloat@ltx
\newcommand{\boxR}[2]{$\Qcircuit @C=1em @R=0em {
 \gate{\!R_P\!} &
}\!\!\!\!\!(H)
$}
\newcommand{\bra}[1]{\langle #1 |}
\newcommand{\ket}[1]{| #1 \rangle}
\newtheorem{theorem}{Theorem}
\newtheorem{corollary}[theorem]{Corollary}
\newtheorem{lemma}[theorem]{Lemma}
\begin{document}

\preprint{APS/123-QED}

\title{Generalized Quantum Signal Processing}

\author{Danial Motlagh}
\affiliation{
University of Toronto, Department of Computer Science, Toronto On, Canada
}

\author{Nathan Wiebe}
\affiliation{
University of Toronto, Department of Computer Science, Toronto On, Canada
}%
\affiliation{
Pacific Northwest National Laboratory, Richland WA, USA
}%
\affiliation{
Canadian Institute for Advanced Research, Toronto On, Canada
}%

\begin{abstract}
Quantum Signal Processing (QSP) and Quantum Singular Value Transformation (QSVT) currently stand as the most efficient techniques for implementing functions of block encoded matrices, a central task that lies at the heart of most prominent quantum algorithms. However, current QSP approaches face several challenges, such as the restrictions imposed on the family of achievable polynomials and the difficulty of calculating the required phase angles for specific transformations. In this paper, we present a Generalized Quantum Signal Processing (GQSP) approach, employing general SU(2) rotations as our signal processing operators, rather than relying solely on rotations in a single basis. Our approach lifts all practical restrictions on the family of achievable transformations, with the sole remaining condition being that $|P|\leq 1$, a restriction necessary due to the unitary nature of quantum computation. Furthermore, GQSP provides a straightforward recursive formula for determining the rotation angles needed to construct the polynomials in cases where $P$ and $Q$ are known. In cases where only $P$ is known, we provide an efficient optimization algorithm capable of identifying in under a minute of GPU time, a corresponding $Q$ for polynomials of degree on the order of $10^7$. We further illustrate GQSP simplifies QSP-based strategies for Hamiltonian simulation, offer an optimal solution to the $\epsilon$-approximate fractional query problem that requires $\mathcal{O}\left(\frac{1}{\delta} + \log(\large\frac{1}{\epsilon})\right)$ queries to perform where $\mathcal{O}(1/\delta)$ is a proved lower bound, and introduces novel approaches for implementing bosonic operators. Moreover, we propose a novel framework for the implementation of normal matrices, demonstrating its applicability through synthesis of diagonal matrices, as well as the development of a new algorithm for convolution through synthesis of circulant matrices using only $\mathcal{O}(d \log{N} + \log^2N)$ 1 and 2-qubit gates for a filter of lengths $d$.
\end{abstract}

%\keywords{Suggested keywords}%Use showkeys class option if keyword
                              %display desired
\maketitle

%\tableofcontents

\section{Introduction}

In recent years, significant advancements in quantum algorithm theory have revealed a powerful, overarching insight: the majority of prominent quantum algorithms, such as Hamiltonian simulation \cite{berry2015hamiltonian,low2017hamiltonianQSP,low2019hamiltonian, LCU}, quantum search \cite{Yoder2014FixedpointSearch, Grover2005FixedpointSearch, Magniez_2011}, factoring \cite{factoring}, and quantum walks \cite{Szegedy2004QMC, Magniez_2011}, can be fundamentally reduced to the central task of implementing a matrix function of a Hamiltonian, $f(H)$, this was shown in \cite{Martyn2021GrandUnification, QSVT}. Over the years, numerous techniques have been developed for constructing such functions, including phase estimation methods such as the HHL algorithm \cite{HLL}, linear combination of unitaries (LCU) \cite{LCU}, and quantum signal processing (QSP) \cite{low2017hamiltonianQSP, dong2022ground,dong2021efficient, rossi2022multivariable, rossi2023quantum, wang2022quantum}. Among these, QSP stands out as the most versatile approach to date, capable of approximating a wide range of matrix functions through eigenvalue or singular value transformations of $H$, while requiring a minimal number of ancilla qubits. The basic idea of QSP is to build a polynomial approximation of the desired function by assuming oracular access to a unitary $U$ which encodes $H$. QSP has been demonstrated to yield asymptotically optimal Hamiltonian simulation algorithms \cite{low2017hamiltonianQSP}. QSP's applicability was then further extended to the case of non-square matrices by the QSVT technique \cite{QSVT}. Together, QSP and QSVT have provided an abstract formalism that facilitates the efficient implementation of a wide range of linear algebraic operations and transformations, creating a new language with greater expressivity for the construction of quantum algorithms.

Despite its success, QSP still suffers from a number of severe limitations. Most notable of them being the restrictions it places on the family of polynomials that can be built using it. For instance, in order to build an arbitrary polynomial using the original QSP one has to split the polynomial into four parts by first breaking the polynomial into its real and imaginary parts, and then further breaking down each of those into their respective even and odd components. One would then combine all parts together using linear combination of unitaries (LCU) \cite{LCU}, and furthermore perform a variant of amplitude amplification \cite{Berry2014OAA} on the resulting circuit.  This requires that we triple the number of operations needed to perform Hamiltonian simulation relative to what we would need if we could implement Hamiltonian dynamics directly.  Lastly, while the original QSP guarantees the existence of a set of parameters leading to polynomials satisfying the specified restrictions, finding those parameters has proved to be a complicated (if computationally efficient) task~\cite{haah2019product,dong2021efficient}. In this paper, we present an algorithm, which we term Generalized Quantum Signal Processing (GQSP), that overcomes all the above limitations leading to a more general framework for the central task of implementing functions of Hamiltonians.

Our GQSP algorithm provides us with the ability to build polynomials of unitary matrices using a single ancilla qubit with the only restriction being that its norm is not greater than 1 on the complex unit circle as demonstrated in Corollary~\ref{cor:PossiblePs}. Since GQSP does not restrict us to a fixed parity for the polynomial, it gives us the ability to approximate functions of a Hamiltonian $H$ without the need for LCU that appears in some applications of QSP~\cite{QSVT}.  Furthermore, in Section~\ref{sec:CalculatingPhases} we demonstrate that computing the rotation angles required to build a given polynomial is much more efficient and conceptually simpler within this framework. This extends the scope of QSP along two different axes. We then utilize this technique in Section~\ref{sec:PhaseFunctions} to develop a conceptually simplified formulation of qubitization-based Hamiltonian simulation, and give a provably optimal algorithm for the fractional query problem as special cases of what we call Phase Functions. In Section~\ref{sec:FourierBasisForMatrices} we introduce a powerful idea that extends the concept of Fourier decomposition to normal matrices using polynomials of a special unitary matrix. More precisely, we will demonstrate that any normal matrix can be written as a polynomial of the root of unity unitary in its eigenbasis. We then explore the utility of this idea by giving a general framework for synthesizing diagonal and convolution matrices. But first, we give a review of the original quantum signal processing algorithm in Section~\ref{sec:RevQSP} which sets the stage for our method introduced in Section ~\ref{sec:GQSP}.

\section{Review of Quantum Signal Processing}\label{sec:RevQSP}
The quantum signal processing algorithm constructs a function of $\mathcal{H}$ by interleaving applications of a signal operator with signal processing operators. The signal operator encodes $\mathcal{H}$ using controlled applications of $U = e^{i\mathcal{H}}$, where an ancillary qubit acts as the control. The signal processing operations, on the other hand, consist of single-qubit operations performed on the ancillary qubit. There are two conventions that are commonly used for the formulation of quantum signal processing. One where the signal operator is expressed in the $\sigma_x$ basis and the signal processing is done in the $\sigma_z$ basis, and another where the signal operator is expressed in the $\sigma_z$ basis and the signal processing is done in the $\sigma_x$ basis. The signal operator in the $\sigma_z$ basis is defined as:
\begin{equation}
    A_{\sigma_z} := 
    \begin{bNiceMatrix}[columns-width=auto]
        e^{i\mathcal{H}} & 0\\
        0 & e^{-i\mathcal{H}}\\
    \end{bNiceMatrix}
\end{equation}
We can express the signal operator in the $\sigma_x$ basis by applying a Hadamard gate to the both sides of the ancilla qubit:
\begin{equation}\label{eq:Original_QSP_Blockencoding}
    A_{\sigma_x} :=  \begin{bNiceMatrix}[columns-width=auto]
    \cos(\mathcal{H}) & i\sin(\mathcal{H})\\
    i\sin(\mathcal{H}) & \cos(\mathcal{H})\\
    \end{bNiceMatrix}
    =(H \otimes I) A_{\sigma_z} (H \otimes I)\nonumber
\end{equation}

This shows that, at a high level, both of these formalisms are conceptually identical although there are distinctions between the forms of the polynomials that can be constructed in both bases.
The results of QSP are often stated in the $A_{\sigma_x}$ formalism as:
\begin{theorem}[Quantum Signal Processing in $\sigma_x$ Basis]\label{thm:QSP_x}
$\forall d\in \mathbb{N}$ and $\Vec{\phi} \in \mathbb{R}^{d + 1}$:
\begin{equation}
    \left( \prod\limits_{j=1}^{d} R_z(\phi_{j}) \, A_{\sigma_x} \right) R_z(\phi_{0}) = 
    \begin{bNiceMatrix}[columns-width=auto]  
    P(\cos(\mathcal{H})) & iQ(\cos(\mathcal{H}))\sin(\mathcal{H})\\
    iQ^*(\cos(\mathcal{H}))\sin(\mathcal{H}) & P^*(\cos(\mathcal{H}))\\
    \end{bNiceMatrix}
\end{equation}
\[
\mbox{\Large$\Longleftrightarrow$}
\]
\begin{center}
    \begin{varwidth}{\textwidth}
        \begin{enumerate}[label=(\Roman*)]
            \item  $P$ and $Q$ are polynomials s.t. $\text{deg}(P) \leq d$ and $\text{deg}(Q)\leq d-1$.
            \item  P has parity d mod 2 and Q has parity (d - 1) mod 2.
            \item  $\forall x \in [-1, 1], \>\> |P(x)|^2 + (1-x^2)|Q(x)|^2 = 1$
        \end{enumerate}
    \end{varwidth}
\end{center}
\end{theorem}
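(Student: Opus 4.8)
The plan is to diagonalize $\mathcal{H}$ and reduce the assertion to an equivalence between $2\times 2$ matrices. For an eigenvalue $\theta\in\mathbb{R}$ of $\mathcal{H}$, write $x=\cos\theta\in[-1,1]$ and $s=\sin\theta$; restricted to that eigenspace, $A_{\sigma_x}$ acts as $\left(\begin{smallmatrix} x & is\\ is & x\end{smallmatrix}\right)$, each $R_z(\phi)$ acts as a diagonal matrix $\left(\begin{smallmatrix} a & 0\\ 0 & \bar a\end{smallmatrix}\right)$ with $|a|=1$, and the claimed matrix becomes $\left(\begin{smallmatrix} P(x) & iQ(x)s\\ iQ^*(x)s & P^*(x)\end{smallmatrix}\right)$, where $R^*$ denotes the polynomial with conjugated coefficients (so $R^*(x)=\overline{R(x)}$ for real $x$). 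Since $\cos$ sweeps out all of $[-1,1]$, a polynomial identity in $x$ that must hold on $[-1,1]$ holds identically; and since the whole circuit block-diagonalizes over the spectrum of $\mathcal{H}$, the operator statement is equivalent to the scalar one. So it suffices to prove the displayed equivalence for these $2\times 2$ matrices as a function of $\theta\in\mathbb{R}$. Note that $x^2+s^2=1$ and $|a|=1$, so $A_{\sigma_x}$ and each $R_z(\phi)$ have determinant $1$.

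For the forward direction (given $\vec\phi$, deduce I--III) I would induct on $d$. The case $d=0$ is $R_z(\phi_0)$, giving $P=a$ (a unit-modulus constant), $Q=0$, for which I--III hold trivially. For the step, the degree-$d$ circuit is $R_z(\phi_d)\,A_{\sigma_x}$ times the degree-$(d-1)$ circuit, so it suffices to show that the family of matrices $\left(\begin{smallmatrix} P & iQs\\ iQ^*s & P^*\end{smallmatrix}\right)$ with $P,Q$ polynomials is closed under left multiplication by $R_z(\phi)\,A_{\sigma_x}$. A short computation (using $s^2=1-x^2$) gives that $A_{\sigma_x}$ sends $(P,Q)\mapsto\bigl(xP-(1-x^2)Q^*,\ xQ+P^*\bigr)$ and that $R_z(\phi)$ multiplies both $P$ and $Q$ by the phase $a$; in particular $\deg P$ rises from $\le d-1$ to $\le d$, $\deg Q$ from $\le d-2$ to $\le d-1$, and both parities flip, which is exactly I and II at level $d$. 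Condition III is then automatic, since the determinant of $\left(\begin{smallmatrix} P & iQs\\ iQ^*s & P^*\end{smallmatrix}\right)$ equals $|P|^2+(1-x^2)|Q|^2$ while the determinant of the whole circuit is $1$.

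For the converse (given $P,Q$ with I--III, construct $\vec\phi$) I would again induct on $d$, peeling off one layer at a time. Given $M=\left(\begin{smallmatrix} P & iQs\\ iQ^*s & P^*\end{smallmatrix}\right)$ at degree $d$, set $M':=A_{\sigma_x}^{\dagger}\,R_z(\phi)^{-1}M$; reversing the computation above, $M'$ is of the same form for \emph{every} $\phi$, with $P'=\bar a\,xP+a(1-x^2)Q^*$ and $Q'=\bar a\,xQ-a\,P^*$. One checks directly that $(P',Q')$ carries the parities required at level $d-1$, and III for $(P',Q')$ is automatic because $M'$ is unitary (a product of unitaries, since I--III make $M$ unitary). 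The only substantive point is to pick $\phi=\phi_d$ forcing $\deg P'\le d-1$ and $\deg Q'\le d-2$: by the parity of $P$ and $Q$ the only coefficients that can obstruct this are the $x^{d+1}$ coefficient of $P'$ and the $x^d$ coefficient of $Q'$, which equal $\bar a\,p_d-a\,\overline{q_{d-1}}$ and $\bar a\,q_{d-1}-a\,\overline{p_d}$, with $p_d,q_{d-1}$ the top coefficients of $P,Q$. Matching the $x^{2d}$ coefficients on the two sides of III yields $|p_d|=|q_{d-1}|$, and this is precisely the condition under which one unit-modulus $a$ (hence one real $\phi_d$) annihilates both obstructing coefficients, namely with $a^2=p_d/\overline{q_{d-1}}$; if instead $p_d=q_{d-1}=0$, parity already forces $\deg P\le d-2$ and $\deg Q\le d-3$, so any $\phi_d$ works. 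Iterating down to $d=0$, where III forces $|P|=1$ and $Q=0$, realized by the single rotation $R_z(\phi_0)$, yields the required $\vec\phi\in\mathbb{R}^{d+1}$.

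The crux --- the only step that is not bookkeeping or a mechanical $2\times 2$ multiplication --- is the last point in the converse: that a single real $\phi_d$ can simultaneously cancel the top-degree terms of both $P'$ and $Q'$. This rests on extracting $|p_d|=|q_{d-1}|$ from the norm constraint III, used together with the parity constraint II, which removes the would-be degree-$d$ term of $P'$ and degree-$(d-1)$ term of $Q'$ and so leaves just one scalar condition in each. Everything else is the routine QSP induction.
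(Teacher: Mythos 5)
The paper does not actually prove Theorem~\ref{thm:QSP_x}: it is stated in the review section (Sec.~\ref{sec:RevQSP}) as a recap of the standard Low--Chuang/QSVT result, so there is no in-paper proof to compare against. Judged on its own, your argument is correct and is the standard QSP proof. The block-diagonalization over eigenvalues of $\mathcal{H}$ reduces the statement to a $2\times 2$ scalar identity in $x=\cos\theta$; your update maps under left multiplication by $A_{\sigma_x}$ and $R_z$ are right (I verified $(P,Q)\mapsto(xP-(1-x^2)Q^*,\ xQ+P^*)$ and the phase-scaling); the determinant observation cleanly delivers condition III in the forward direction (both $A_{\sigma_x}$ and $R_z$ in the $\mathrm{diag}(a,\bar a)$ convention have determinant $1$, so $|P|^2+(1-x^2)|Q|^2=\det M=1$); and in the converse the parity bookkeeping correctly isolates the two obstructing coefficients $\bar a\,p_d-a\,\overline{q_{d-1}}$ and its negative conjugate, while matching the $x^{2d}$ coefficient of III gives $|p_d|=|q_{d-1}|$, which is exactly what lets a unit-modulus $a$ with $a^2=p_d/\overline{q_{d-1}}$ kill both. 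The degenerate case $p_d=q_{d-1}=0$ is also correctly dispatched by parity.

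It is worth noting that your converse mirrors very closely the induction the paper actually carries out for its own Theorem~\ref{thm:PolynomialOfUnitaries} (GQSP): peel off the top layer with $A^\dagger R^\dagger$, observe the norm condition is preserved for free by unitarity, then use the norm constraint on the top/bottom coefficients to choose the rotation that simultaneously drops the degree of both polynomials. In the paper that constraint is $a_d a_s^*=-b_d b_s^*$ extracted from the Fourier-orthogonality expansion of $|P|^2+|Q|^2=1$ on $\mathbb{T}$; in your proof it is the cleaner $|p_d|=|q_{d-1}|$ extracted from the leading coefficient of $|P|^2+(1-x^2)|Q|^2=1$ on $[-1,1]$, with the extra help of fixed parity removing the would-be next term. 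The GQSP setting lacks the parity symmetry, which is precisely why the paper needs a more involved orthogonality argument and two free real parameters $(\theta_d,\phi_d)$ per layer rather than your single $\phi_d$. So the two proofs are genuinely parallel, and reading them side by side is a good way to see what the extra SU(2) freedom in GQSP is buying.
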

$ \,$

In practice, one often only cares about building a particular polynomial $P$, so the question becomes, for what $P$s there exists a $Q$ such that they satisfy the conditions of Theorem~\ref{thm:QSP_x}? It turns out the restrictions above can be quite limiting. For example, the 3rd condition requires $|P(\pm 1)| = 1$. This limitation can be overcome by applying Hadamards on the ancilla before and after the QSP sequence, effectively converting into the $A_{\sigma_z}$ formalism. In this case, it can be shown that we can implement any real polynomial with parity $d$ mod 2 such that $\text{deg}(P) \leq d$, and
\begin{theorem}[Quantum Signal Processing in $\sigma_z$ Basis]\label{thm:QSP_z}

 $\forall P\in \mathbb{R}[x]$, with $\textrm{deg}(P) = d, \> \exists\Vec{\phi} \in \mathbb{R}^{d + 1}$ s.t:\\
\begin{equation}
\forall x\in \mathbb{T}, \> |P(x)|^2\leq 1\> \wedge\> \text{Parity}(P) = d \>\text{mod}\>2
\end{equation}
\[
\mbox{\Large$\Longleftrightarrow$}
\]
\begin{equation}
    \begin{bmatrix}
        P(e^{i\mathcal{H}}) & .\>\>\>  \\
        
        .\>\>\> & .\>\>\> \\
    \end{bmatrix} = \left( \prod\limits_{j=1}^{d} R_x(\phi_{j}) \, A_{\sigma_z} \right) R_x(\phi_{0})  
\end{equation}
\end{theorem}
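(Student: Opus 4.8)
The plan is to derive the stated equivalence from Theorem~\ref{thm:QSP_x} by conjugating with Hadamard gates, which turns it into a classical polynomial-completion question. Since $H\sigma_z H=\sigma_x$ we have $HR_z(\phi)H=R_x(\phi)$, and $A_{\sigma_z}=(H\otimes I)A_{\sigma_x}(H\otimes I)$ by definition, so inserting $(H\otimes I)^2=I$ between consecutive factors gives
\begin{equation}
\left(\prod_{j=1}^{d}R_x(\phi_j)\,A_{\sigma_z}\right)R_x(\phi_0)=(H\otimes I)\left[\left(\prod_{j=1}^{d}R_z(\phi_j)\,A_{\sigma_x}\right)R_z(\phi_0)\right](H\otimes I).
\end{equation}
By Theorem~\ref{thm:QSP_x} the bracketed matrix has the $(P',Q')$ block form, and conjugating a $2\times2$ block matrix with blocks $a,b,c,d$ by $H\otimes I$ produces one whose top-left block is $\tfrac12(a+b+c+d)$; hence the top-left block of the $A_{\sigma_z}$ circuit equals $\mathrm{Re}[P'](\cos\mathcal{H})+i\,\mathrm{Re}[Q'](\cos\mathcal{H})\sin\mathcal{H}$, where $\mathrm{Re}[P']$ is a real-coefficient polynomial of parity $d\bmod2$ and degree $\le d$ and $\mathrm{Re}[Q']$ has parity $(d-1)\bmod2$ and degree $\le d-1$. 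The task thus becomes: characterize which polynomials $P$ have $P(e^{i\mathcal{H}})$ of this form for some admissible $P',Q'$.

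For $(\Rightarrow)$, let $P$ be real of degree $d$ with parity $d\bmod2$ and $|P(x)|\le1$ on $\mathbb{T}$, and write $P(x)=\sum_m p_m x^m$ (the sum over $m\le d$ with $m\equiv d\pmod2$), so that $P(e^{i\theta})=g(\theta)+ih(\theta)$ with $g(\theta)=\sum_m p_m\cos m\theta$ and $h(\theta)=\sum_m p_m\sin m\theta$. Using the Chebyshev identities $T_m(\cos\theta)=\cos m\theta$ and $U_{m-1}(\cos\theta)\sin\theta=\sin m\theta$, the choice $\mathrm{Re}[P']:=\sum_m p_m T_m$ and $\mathrm{Re}[Q']:=\sum_m p_m U_{m-1}$ makes the block above equal exactly $\sum_m p_m e^{im\mathcal{H}}=P(e^{i\mathcal{H}})$, with the degrees and parities required of the real parts of $P'$ and $Q'$. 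It then remains to choose imaginary parts $\mathrm{Im}[P']$ (parity $d\bmod2$, degree $\le d$) and $\mathrm{Im}[Q']$ (parity $(d-1)\bmod2$, degree $\le d-1$) so that $P'=\mathrm{Re}[P']+i\,\mathrm{Im}[P']$ and $Q'=\mathrm{Re}[Q']+i\,\mathrm{Im}[Q']$ satisfy $|P'(x)|^2+(1-x^2)|Q'(x)|^2=1$ on $[-1,1]$; substituting $x=\cos\theta$ this reads $\mathrm{Im}[P'](\cos\theta)^2+\sin^2\theta\,\mathrm{Im}[Q'](\cos\theta)^2=1-g(\theta)^2-h(\theta)^2=1-|P(e^{i\theta})|^2$, a nonnegative even trigonometric polynomial precisely because $|P|\le1$ on $\mathbb{T}$. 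Writing its value as a polynomial $F(y)$ in $y=\cos\theta$ that is nonnegative on $[-1,1]$, we need a decomposition $F=A^2+(1-y^2)B^2$ with $A$ of parity $d\bmod2$ (degree $\le d$) and $B$ of parity $(d-1)\bmod2$ (degree $\le d-1$) --- the standard complementary-polynomial (Fej\'er--Riesz / Luk\'acs type) lemma already underlying the proof of Theorem~\ref{thm:QSP_x}. With $P'$ and $Q'$ so built, Theorem~\ref{thm:QSP_x} then supplies a vector $\vec\phi\in\mathbb{R}^{d+1}$.

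For $(\Leftarrow)$, suppose some $\vec\phi\in\mathbb{R}^{d+1}$ realizes $P(e^{i\mathcal{H}})$ as the top-left block. Being a sub-block of a unitary, it has operator norm $\le1$, so taking $\mathcal{H}=\theta I$ for every $\theta\in\mathbb{R}$ gives $|P(e^{i\theta})|\le1$, i.e. $|P(x)|\le1$ on $\mathbb{T}$. For the parity claim, on an eigenvalue $\theta$ of $\mathcal{H}$ each of the $d$ signal factors contributes $e^{i\theta}$ or $e^{-i\theta}$, so the top-left block is a Laurent polynomial $\sum_m c_m e^{im\theta}$ over $|m|\le d$ with $m\equiv d\pmod2$; equating this with $P(e^{i\theta})$ and invoking uniqueness of Fourier coefficients forces every monomial of $P$ to have exponent $\le d$ and $\equiv d\pmod2$, i.e. $P$ has parity $d\bmod2$.

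The step I expect to be the real obstacle is the completion lemma used in $(\Rightarrow)$: extracting imaginary parts of $P'$ and $Q'$ with exactly the right degrees and, crucially, the complementary parities, so that $P'$ and $Q'$ are admissible definite-parity inputs to Theorem~\ref{thm:QSP_x}. That is the only place where the hypothesis $|P|\le1$ on $\mathbb{T}$ is actually used, and the only non-elementary ingredient; everything else is bookkeeping with the Hadamard conjugation and the Chebyshev identities.
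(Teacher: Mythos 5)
The paper states Theorem~\ref{thm:QSP_z} in its review section (Section~\ref{sec:RevQSP}) without proof---it is presented as known background, with only the remark that one "applies Hadamards on the ancilla before and after the QSP sequence"---so there is no in-paper argument to compare against. Your reduction is exactly the route that remark gestures at, and your mechanics are correct: the telescoping $(H\otimes I)$ insertions, the identity that the top-left block after conjugation is $\tfrac12(a+b+c+d)=\mathrm{Re}[P'](\cos\mathcal{H})+i\,\mathrm{Re}[Q'](\cos\mathcal{H})\sin\mathcal{H}$, the Chebyshev identifications $T_m(\cos\theta)=\cos m\theta$ and $U_{m-1}(\cos\theta)\sin\theta=\sin m\theta$ with the resulting degree/parity bookkeeping, and the reverse-direction argument via the Laurent structure of the signal factors. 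You also correctly observe that the cross terms drop out, leaving $\mathrm{Im}[P']^2+(1-y^2)\mathrm{Im}[Q']^2=1-|P(e^{i\theta})|^2$ as the remaining constraint.

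The one load-bearing step you explicitly flag and do not prove---the parity- and degree-constrained completion $F=A^2+(1-y^2)B^2$ with $A$ of parity $d\bmod 2$, degree $\le d$, and $B$ of parity $(d-1)\bmod 2$, degree $\le d-1$---is indeed where all the analytic content lives, and it is not automatic: the plain Markov--Luk\'acs decomposition of a nonnegative polynomial on $[-1,1]$ gives no parity control, and naively symmetrizing an arbitrary decomposition splits into four squares rather than two. The correct justification uses that $F$ is an even polynomial (which follows from $P$ real with definite parity, as you note) together with a Fej\'er--Riesz factorization on the circle carried out so as to respect the $\theta\mapsto-\theta$ symmetry; this is precisely the completion lemma appearing in the standard existence proofs of Theorem~\ref{thm:QSP_x} (e.g.\ the Low--Chuang and Gily\'en et al.\ treatments). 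So your sketch is sound in structure and the gap is honestly identified, but as written the forward direction is not self-contained: you are invoking a lemma that is of comparable depth to the theorem being proved, rather than something that follows from the literal statement of Theorem~\ref{thm:QSP_x}. To close the proof you would need to either cite that completion lemma explicitly or reproduce the even-Fej\'er--Riesz argument; the rest is, as you say, bookkeeping.
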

\noindent Here $\mathbb{T}= \{x\in\mathbb{C}: |x| = 1\}$ is again the complex unit circle.

The above formalism indeed mitigates some previously discussed limitations. However, the requirement that the polynomial $P$ must be real and possess definite parity necessitates the combination of four separate instances of QSP using Linear Combination of Unitaries (LCU) to construct an arbitrary complex polynomial with indefinite parity. In the following section, we address these constraints with a novel approach. We propose a more generalized formulation of Quantum Signal Processing by allowing our signal processing operators to be arbitrary SU(2) rotations, effectively lifting the "realness" condition on the polynomials. Furthermore, we eliminate the parity restriction by simplifying our signal operator. Combined together, our reformulation successfully eliminates all practical restrictions on $P$, which we will elaborate on and demonstrate in the next section.
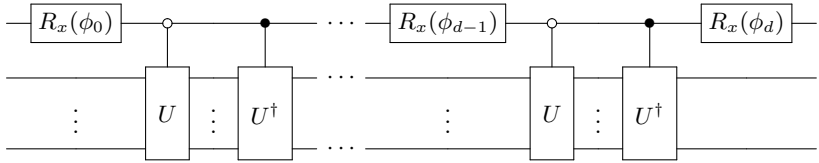
\begin{figure}[t!]
\[\Qcircuit @C=1em @R=1em {
& \gate{R_x(\phi_0)} & \ctrlo{1} & \qw & \ctrl{1} & \qw & \cdots & & \gate{R_x(\phi_{d-1})} & \ctrlo{1} & \qw &\ctrl{1} &\gate{R_x(\phi_d)}&\qw \\
& \qw & \multigate{2}{U} & \qw & \multigate{2}{U^\dagger} & \qw & \cdots & & \qw & \multigate{2}{U} & \qw & \multigate{2}{U^\dagger}&\qw&\qw \\
& \vdots & & \vdots & & & & & \vdots & & \vdots & \\
& \qw & \ghost{U} & \qw & \ghost{U^\dagger} & \qw & \cdots & & \qw & \ghost{U} & \qw & \ghost{U^\dagger}&\qw&\qw \\
}
\]
\caption{Quantum circuit for the original QSP approach where the angles $\phi$ are chosen to enact a polynomial transformation of $U$\label{fig:QSP_circuit}.}
\end{figure}

\section{Generalized Quantum Signal Processing}\label{sec:GQSP}

Our generalized quantum signal processing approach seeks to remove the limitations inherent in traditional QSP, thereby providing a more powerful mechanism for constructing functions of Hamiltonians.  Our approach generalizes QSP by using SU(2) rotations rather than $X$-rotations on the control ancilla similar to~\cite{2022arXiv220602826D, yu2022power}; however, our approach eschews the need to use $U^\dag$ and avoids the use of Laurent polynomials in the analysis.  Specifically, we use the convention that  the signal operator is a 0-controlled application of $U = e^{iH}$:
\begin{equation}
A = (\vert 0\rangle\langle0\vert \otimes U)+(\vert 1\rangle\langle1\vert \otimes I) = \begin{bmatrix}

U & 0  \\

0 & I \\
\end{bmatrix}\end{equation}
Next, instead of performing rotations in a single basis, we allow for signal processing operations to be arbitrary SU(2) rotations of the ancillary qubit:
\begin{equation}
R(\theta, \phi, \lambda) = \begin{bmatrix} e^{i(\lambda+\phi)}\cos(\theta) & e^{i\phi}\sin(\theta) \\
e^{i\lambda}\sin(\theta) & -\cos(\theta)  \\ \end{bmatrix} \otimes I\label{eq:R}
\end{equation}
Then by interleaving the above 2 operations as demonstrated in Figure ~\ref{fig:GQSP_circuit}, we can block encode polynomial transformations of the unitary matrix $U$ without further assumptions. Note that below we have no assumptions made on the parity of the polynomial transformations unlike those made by the Theorem~\ref{thm:QSP_x}. Lastly, notice that we only need to specify a single $\lambda$ since we can absorb the other instances into the instance of $\phi$ before them.\newpage
\begin{theorem}[Generalized Quantum Signal Processing]\label{thm:PolynomialOfUnitaries}
$\forall d\in \mathbb{N},\>\exists\> \vec{\theta}, \vec{\phi} \in \mathbb{R}^{d+1},\>\lambda \in \mathbb{R}$ s.t:\\

\begin{equation}
    \left( \prod_{j=1}^{d} \mbox{$\large {R(\theta_{j}, \phi_{j}, 0) A}$} \right) R(\theta_0, \phi_0, \lambda) = 
    \begin{bmatrix}
        P(U) & .\>\>\>  \\
        
        Q(U) & .\>\>\> \\
    \end{bmatrix}\\
\end{equation}
\[
\mbox{\Large$\Longleftrightarrow$}
\]
\begin{center}
    \begin{varwidth}{\textwidth}
        \begin{enumerate}
            \item  $P, \,Q\in \mathbb{C}[x]$ and $\text{deg}(P), \text{deg}(Q) \leq d$.
            \item  $\forall x\in \mathbb{T},\> |P(x)|^2 + |Q(x)|^2 = 1$.
        \end{enumerate}
    \end{varwidth}
\end{center}
\end{theorem}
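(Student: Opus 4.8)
\section*{Proof strategy for Theorem~\ref{thm:PolynomialOfUnitaries}}

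The plan is to prove both directions by induction on $d$, after reducing to a scalar statement. First I would diagonalize the unitary $U = \sum_k \omega_k\,\ket{k}\bra{k}$ with each $\omega_k\in\mathbb{T}$; then $A$ is block diagonal with $2\times 2$ blocks $\mathrm{diag}(\omega_k,1)$ and the whole interleaved product splits accordingly, so it suffices to work with a formal variable $z$, the matrix $A(z)=\mathrm{diag}(z,1)$, and $R(\theta,\phi,\lambda)$ the $2\times 2$ matrix of Eq.~\eqref{eq:R}. Any polynomial identity in $z$ then lifts to $U$ via the functional calculus, and $|P(z)|^2+|Q(z)|^2=1$ on $\mathbb{T}$ is equivalent to $P(U)^\dagger P(U)+Q(U)^\dagger Q(U)=I$.

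For the forward direction ($\Rightarrow$) the argument is short. Each $R$ is unitary and $A(z)$ is unitary for $z\in\mathbb{T}$, so the product $M_d(z)$ is unitary on $\mathbb{T}$ and its first column $(P(z),Q(z))^{\top}$ has unit norm, which is condition~2; and since each of the $d$ copies of $A$ contributes at most one factor of $z$ to any entry, every entry of $M_d(z)$ is a polynomial of degree $\le d$, which is condition~1.

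The converse ($\Leftarrow$) is where the work lies, and I would do it by peeling off one outermost layer $R(\theta_d,\phi_d,0)A$ at a time. The base case $d=0$ is immediate: write the constants as $P=e^{i(\lambda+\phi_0)}\cos\theta_0$ and $Q=e^{i\lambda}\sin\theta_0$, which is possible exactly because $|P|^2+|Q|^2=1$. For the inductive step, let $P,Q$ have degree $\le d$ with $|P|^2+|Q|^2\equiv 1$ on $\mathbb{T}$. Expanding $P(z)\overline{P(z)}+Q(z)\overline{Q(z)}$ on $\mathbb{T}$ (where $\bar z=z^{-1}$) and setting the coefficient of $z^{d}$ to zero yields the key relation $p_d\bar p_0+q_d\bar q_0=0$ between the lowest and highest coefficients of $P,Q$; in the generic case $p_0,q_0\neq 0$ this forces $|p_0|\,|p_d|=|q_0|\,|q_d|$, which is precisely the compatibility needed to choose a single $\theta_d\in(0,\pi/2)$ with $\tan\theta_d=|p_0|/|q_0|=|q_d|/|p_d|$ and then a $\phi_d$ making both $\cos\theta_d\,e^{-i\phi_d}p_0+\sin\theta_d\,q_0=0$ and $\sin\theta_d\,e^{-i\phi_d}p_d-\cos\theta_d\,q_d=0$ hold (the two resulting values of $e^{-i\phi_d}$ coincide, using the relation above). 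The degenerate sub-cases, where some of $p_0,q_0,p_d,q_d$ vanish, each admit a straightforward valid choice of $(\theta_d,\phi_d)$ with one of the two constraints becoming vacuous. With $\theta_d,\phi_d$ in hand I would set
\begin{equation}
P'(z):=z^{-1}\bigl(\cos\theta_d\,e^{-i\phi_d}P(z)+\sin\theta_d\,Q(z)\bigr),\qquad Q'(z):=\sin\theta_d\,e^{-i\phi_d}P(z)-\cos\theta_d\,Q(z),
\end{equation}
so that $P'$ is a genuine polynomial (first constraint), $\deg Q'\le d-1$ (second constraint), $\deg P'\le d-1$ automatically, and $(zP'(z),Q'(z))^{\top}=R(\theta_d,\phi_d,0)^{-1}(P(z),Q(z))^{\top}$, whence unitarity of $R$ and $|z|=1$ give $|P'|^2+|Q'|^2\equiv 1$ on $\mathbb{T}$. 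Applying the inductive hypothesis to $(P',Q')$ supplies $\theta_0,\dots,\theta_{d-1},\phi_0,\dots,\phi_{d-1},\lambda$ realizing $(P'(U),Q'(U))^{\top}$ as a first column; left-multiplying that circuit by $R(\theta_d,\phi_d,0)A$ and invoking the lifted scalar identity $R(\theta_d,\phi_d,0)(zP'(z),Q'(z))^{\top}=(P(z),Q(z))^{\top}$ closes the induction.

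The main obstacle will be this inductive step of the converse: establishing the existence of real angles $\theta_d,\phi_d$ that simultaneously annihilate the constant term of the new top polynomial $zP'$ and the leading term of the new bottom polynomial $Q'$. This succeeds only because of the coefficient relation $p_d\bar p_0+q_d\bar q_0=0$ extracted from $|P|^2+|Q|^2\equiv 1$, and it requires a little care in disposing of the degenerate sub-cases. Everything else is bookkeeping with $2\times 2$ unitaries and the functional calculus for $U$.
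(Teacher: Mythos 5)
Your proposal is correct and follows essentially the same route as the paper's proof: both directions by induction on $d$, with the converse established by peeling off the outermost $R(\theta_d,\phi_d,0)A$ layer and using the degree-$d$ Fourier coefficient of $|P|^2+|Q|^2\equiv 1$ (your $p_d\bar p_0+q_d\bar q_0=0$, the paper's $a_d a_s^*=-b_d b_s^*$) to justify the same choice $\tan\theta_d=|q_d|/|p_d|$, $\phi_d=\mathrm{Arg}(p_d/q_d)$. The only cosmetic difference is that the paper indexes the lowest coefficient by the smallest degree $s$ actually present and first disposes of $\deg P\neq\deg Q$ via $\theta_d\in\{0,\pi/2\}$, whereas you work with $p_0,q_0$ directly and defer those vanishing-coefficient situations to the degenerate-case discussion; the two bookkeepings are equivalent.
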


\noindent Here $\mathbb{T}= \{x\in\mathbb{C}: |x| = 1\}$ is again the complex unit circle.

\par \vspace{\baselineskip}
%Notice that we only need a single choice of $\lambda$, for the first iteration, and all subsequent iterations only require choices of $\theta$ and $\phi$. \\

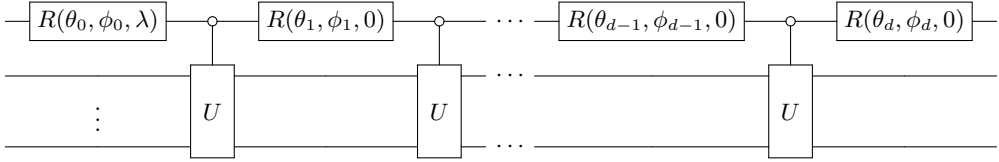
\begin{figure}[t!]
\[\Qcircuit @C=1em @R=1em {
& \gate{R(\theta_0, \phi_0, \lambda)} & \ctrlo{1} & \gate{R(\theta_1, \phi_1, 0)} & \ctrlo{1} & \qw & \cdots & & \gate{R(\theta_{d-1}, \phi_{d-1}, 0)} & \ctrlo{1} & \gate{R(\theta_d, \phi_d, 0)} & \qw \\
& \qw & \multigate{2}{U} & \qw & \multigate{2}{U} & \qw & \cdots & & \qw & \multigate{2}{U} & \qw & \qw \\
& \vdots & & & & & & & & & & & & \\
& \qw & \ghost{U} & \qw & \ghost{U} & \qw & \cdots & & \qw & \ghost{U} & \qw & \qw \\
}
\]
\caption{Quantum circuit for GQSP wherein the angles $\theta$ and $\phi$ are chosen to enact a polynomial transformation of $U$\label{fig:GQSP_circuit}.  Here $R(\theta,\phi,\lambda)$ is a general single qubit rotation as specified in~\eqref{eq:R}.}
\end{figure}
\begin{proof}
 Both directions are proven by induction on $d$.
To show the forward direction, we need to show inductively that for all $\{\theta_j\}$ and $\{\phi_j\}$ and $d$ both of the claimed restrictions are met.  The base case of $d=0$ is trivial since: 

\begin{equation}
    R(\theta_0, \phi_0, \lambda) = \begin{bmatrix} e^{i(\lambda+\phi)}\cos(\theta) I & e^{i\phi}\sin(\theta)I \\
    e^{i\lambda}\sin(\theta)I & -\cos(\theta) I \\ \end{bmatrix}
\end{equation}
 Next let $d \in \mathbb{N}_{>0}$ and assume the induction hypothesis holds for $d - 1$ producing $\hat{P}(U)$ and $\hat{Q}(U)$. Then:\\
\begin{equation}
\begin{bmatrix}
P(U) & .\>\>\>  \\

Q(U) & .\>\>\> \\
\end{bmatrix}  = 
\begin{bmatrix} e^{i\phi}\cos(\theta) U & e^{i\phi}\sin(\theta)I \\
\sin(\theta)U & -\cos(\theta) I \\ \end{bmatrix}
\begin{bmatrix}
\hat{P}(U) & .\>\>\>  \\

\hat{Q}(U) & .\>\>\> \\
\end{bmatrix}
\end{equation}
Thus:
\begin{equation}
P(U) = e^{i\phi}(\cos(\theta) U\hat{P}(U) + \sin(\theta)\hat{Q}(U))\\
\end{equation}
\begin{equation}
Q(U) = \sin(\theta) U\hat{P}(U) -\cos(\theta)\hat{Q}(U)\\
\end{equation}
 Since $\hat{P}$ and $\hat{Q}$ have degree $\leq d -1$, then $P$ and $Q$ must have degree $\leq d$. Furthermore, given the fact that all operations are unitary, property (2) trivially holds. This completes the proof for the forward direction.\\

The proof for the reverse direction involves showing by induction that any pair of polynomials satisfying the conditions of Theorem~\ref{thm:PolynomialOfUnitaries} can be constructed through an appropriate choice of rotation angles for the single qubit unitaries in Fig.~\ref{fig:GQSP_circuit}. The base case of $d=0$ is again trivial since $P$ and $Q$ would both be constants whose norm squared adds to 1. Any such pair can be written as $P = e^{i(\lambda+\phi)}\cos(\theta), \>Q = e^{i\phi}\sin(\theta)$ and implemented by $R(\theta_0, \phi_0, \lambda)$.
Next let $d \in \mathbb{N}_{>0}$, assume $P$ and $Q$ satisfy conditions (1) and (2) from Theorem~\ref{thm:PolynomialOfUnitaries}, and the induction hypothesis holds for $d - 1$. Specifically, we assume from our induction hypothesis that for any $\hat{P}$ and $\hat{Q}$ of degree at most $d-1$ satisfying:
$\forall x\in \mathbb{T},\> |\hat{P}(x)|^2 + |\hat{Q}(x)|^2 = 1$, there exists $\vec{\theta}, \vec{\phi} \in \mathbb{R}^{d},\>\lambda \in \mathbb{R}$ such that:
\begin{equation}\begin{bmatrix}
\hat{P}(U) & .\>\>\>  \\

\hat{Q}(U) & .\>\>\> \\
\end{bmatrix} =
\left( \prod_{j=1}^{d-1} \mbox{$\large {R(\theta_{j}, \phi_{j}, 0) A}$} \right) R(\theta_0, \phi_0, \lambda)
\end{equation}
Thus, it suffices to find $\theta_{d}$ and $\phi_{d}$ such that $\hat{P}$ and $\hat{Q}$ are of degree at most $d-1$ satisfying $|\hat{P}|^2 + |\hat{Q}|^2 = 1$ on $\mathbb{T}$, where $\hat{P}$ and $\hat{Q}$ are given by:
\begin{equation}
A^{\dag}R(\theta_{d}, \phi_{d}, 0)^{\dag}
\begin{bmatrix}
P(U) & .\>\>\>  \\

Q(U) & .\>\>\> \\
\end{bmatrix} =
\begin{bmatrix}
\hat{P}(U) & .\>\>\>  \\

\hat{Q}(U) & .\>\>\> \\
\end{bmatrix}
\end{equation}
\begin{equation}
=\begin{bmatrix} e^{-i\phi_d}\cos(\theta_d) U^{\dag} & \sin(\theta_d)U^{\dag} \\
e^{-i\phi_d}\sin(\theta_d)I & -\cos(\theta_d) I \\ \end{bmatrix}
\begin{bmatrix}
P(U) & .\>\>\>  \\

Q(U) & .\>\>\> \\
\end{bmatrix}
\end{equation}
Then:
\begin{equation}\label{eq:P_Hat}
    \hat{P}(U) = e^{-i\phi_d}\cos(\theta_d) U^{\dag}P(U) + \sin(\theta_d)U^{\dag}Q(U)
\end{equation}
\begin{equation}\label{eq:Q_Hat}
    \hat{Q}(U) = e^{-i\phi_d}\sin(\theta_d)P(U) -\cos(\theta_d) Q(U)
\end{equation}
First notice that regardless of our choice of $\theta_{d}$ and $\phi_{d}$, we have:
\begin{equation}
    \|\hat{P}(e^{it})\|^2 + \|\hat{Q}(e^{it})\|^2 = \hat{P}(e^{it})\hat{P}^*(e^{it}) + \hat{Q}(e^{it})\hat{Q}^*(e^{it})
\end{equation}
\begin{equation}
    = \cos(\theta_d)^2\|P(e^{it})\|^2 + \sin(\theta_d)^2\|Q(e^{it})\|^2 +\sin(\theta_d)^2\|P(e^{it})\|^2+\cos(\theta_d)^2\|Q(e^{it})\|^2
\end{equation}
\begin{equation}
    = \|P(e^{it})\|^2 + \|Q(e^{it})\|^2 = 1
\end{equation}
Hence, the norm condition is again satisfied trivially by the unitary nature of our operations. Thus, it remains to show $\hat{P}$ and $\hat{Q}$ are of degree at most $d-1$.\\
Let $a_i$ and $b_i$ be the coefficients of the $i$-degree term in $P$ and $Q$ respectively. By our assumption, we have that the target polynomials $P$ and $Q$ satisfy the norm condition, or equivalently for any real-valued $t$:
\begin{equation}\label{eq:Satisfaction_of_property2}
    \|P(e^{it})\|^2 + \|Q(e^{it})\|^2 = P(e^{it})P^*(e^{it}) + Q(e^{it})Q^*(e^{it}) = 1
\end{equation}
\begin{equation}
    \left(\sum_{n = 0}^d a_n e^{int}\right)\left(\sum_{n = 0}^d a_n^* e^{-int}\right)+
    \left(\sum_{n = 0}^d b _n e^{int}\right)\left(\sum_{n = 0}^d b_n^* e^{-int}\right) = 1
\end{equation}
Next, we need to argue about the structure of the coefficients in this expansion.  To begin we note that the inner product between the two parts from $P$ satisfies:
\begin{equation}
\left(\sum_{n = 0}^d a_n e^{int}\right)\left(\sum_{m = 0}^d a_m^* e^{-imt}\right) = \|\vec{a}\|^2 + \sum_n \sum_{m\ne n} a_n a_m^* e^{i(n-m)t}.
\end{equation}
Repeating the same expansion for the $b$ coefficients leads to the observation that for all $t$:
\begin{equation}
1 = \|\vec{a}\|^2+ \|\vec{b}\|^2 + \sum_n \sum_{m\ne n} a_n a_m^* e^{i(n-m)t} +b_n b_m^* e^{i(n-m)t}.\label{eq:expandSimp}
\end{equation}
We then note that the $t$-dependent sum must generically be zero because of the fact that the exponentials form an orthogonal function basis as seen by the Fourier transform as noted by the fact that
\begin{equation}
    \frac{1}{2\pi}\int_{-\infty}^\infty e^{i(p-q)x} \mathrm{d}x = \delta(p-q).
\end{equation}
where $\delta$ is the Dirac-delta function.  Thus because the functions form an orthogonal function space that does not contain the constant function, we cannot form the constant function out of a linear combination of non-constant exponentials, and the only way that we can satisfy~\eqref{eq:expandSimp} for all real-valued $t$ is if:
\begin{equation}
    \sum_{n = 0}^d (\|a_n\|^2 + \|b_n\|^2) = 1
\end{equation}
and
\begin{equation}\label{eq:cancelling_Terms}
    \forall i\in \mathbb{Z}, i\neq 0, \qquad \sum_{n} a_na_{n-i}^* = -\sum_{n}b_nb_{n-i}^*
\end{equation}
Let $s, s'\geq 0$ be the smallest degrees present in $P$ and $Q$ respectively, then notice that~\eqref{eq:cancelling_Terms} implies $\textrm{deg}(P)-s = \textrm{deg}(Q)-s'$ since otherwise setting $i = \max(\textrm{deg}(P)-s, \textrm{deg}(Q)-s')$ would result in one side of~\eqref{eq:cancelling_Terms}  being 0 and the other being a non-zero value. It's easy to see when $\textrm{deg}(P) > \textrm{deg}(Q)$ or $\textrm{deg}(P) < \textrm{deg}(Q)$ setting $\theta_d$ equal to $0$ or $\frac{\pi}{2}$ respectively will result in the desired $\hat{P}$ and $\hat{Q}$. Therefore, without loss of generality, we assume $\textrm{deg}(P) = \textrm{deg}(Q) = d$.\, which also implies $s=s'$. Hence using ~\eqref{eq:cancelling_Terms} we get:
\begin{equation}\label{eq:Orthogonality_Relation}
    a_d a^*_s = -b_d b^*_s
\end{equation}
Next, we choose $\theta_d = \tan^{-1}(\frac{|b_d|}{|a_d|})$ and $\phi_d = \textrm{Arg}(\frac{a_d}{b_d})$. Then by ~\eqref{eq:Orthogonality_Relation}:
\begin{equation}
e^{i\phi_d} \mbox{\Large$\frac{\cos(\theta_d)}{\sin(\theta_d)} = \frac{a_d}{b_d} = -\frac{b^*_s}{a^*_s}$}
\end{equation} 
Then it is easy to see after substituting our choices of $\theta_d$ and $\phi_d$ into~\eqref{eq:P_Hat} and~\eqref{eq:Q_Hat} the $s$ degree terms of $P$ and $Q$ in $\hat{P}$ cancel (no negative degree terms are introduced) and $d$ degree terms of $P$ and $Q$ in $\hat{Q}$ also cancel. Thus, $\textrm{deg}(\hat{P})=\textrm{deg}(\hat{Q})=d-1$ as desired.
\end{proof}

 Theorem~\ref{thm:PolynomialOfUnitaries} specifies the necessary and sufficient conditions for constructible pairs of polynomials $P$ and $Q$ using our method. However, in practice, one often only cares about building a particular polynomial $P$, so the question becomes: ``For what $P$s there exists a $Q$ such that they satisfy the conditions of theorem~\ref{thm:PolynomialOfUnitaries}?''\\
 In the following theorem we show that, as long as $|P|^2 \leq 1$ on $\mathbb{T}$, there exists such a $Q$.
\begin{theorem}\label{thm:QExistence}

 $\forall P\in \mathbb{C}[x]$, we have:
\begin{equation}
\forall x\in \mathbb{T}, \>|P(x)|^2 \leq 1
\end{equation}
\[
\mbox{\Large$\Longleftrightarrow$}
\]
\begin{equation}
\exists Q\in \mathbb{C}[x]\>\> s.t. \>\>\textrm{deg}(P)=\textrm{deg}(Q) \>\> \wedge \>\> \forall x\in \mathbb{T},\> |P(x)|^2 + |Q(x)|^2 = 1
\end{equation}
\end{theorem}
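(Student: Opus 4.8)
The plan is to reduce this to a spectral (Fej\'er--Riesz) factorization of a nonnegative trigonometric polynomial. The backward implication is immediate --- if $|P(x)|^2+|Q(x)|^2=1$ on $\mathbb{T}$ then $|P(x)|^2\le 1$ there --- so I would focus entirely on the forward direction. Write $P(x)=\sum_{n=0}^{d}a_n x^n$ with $a_d\neq 0$, so $\deg(P)=d$. Putting $x=e^{it}$ and using $\overline{e^{it}}=e^{-it}$, I observe that
\begin{equation}
h(t):=1-|P(e^{it})|^2=1-\sum_{n,m=0}^{d}a_n\overline{a_m}\,e^{i(n-m)t}
\end{equation}
is a real trigonometric polynomial whose top frequency carries coefficient $-a_d\overline{a_0}$, hence of trigonometric degree at most $d$, and by hypothesis $h\ge 0$ everywhere. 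So the task becomes: find $Q\in\mathbb{C}[x]$ with $\deg(Q)=d$ and $|Q(e^{it})|^2=h(t)$.

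To produce such a $Q$ I would use the root-flipping form of the Fej\'er--Riesz theorem. Introduce the conjugate--reciprocal polynomial $P^{\sharp}(z):=z^{d}\,\overline{P(1/\bar z)}=\sum_{k=0}^{d}\overline{a_{d-k}}\,z^{k}$, which on $\mathbb{T}$ equals $z^{d}\overline{P(z)}$, so that $|P(z)|^2=z^{-d}P(z)P^{\sharp}(z)$ there. Then $G(z):=z^{d}-P(z)P^{\sharp}(z)$ has degree at most $2d$ and satisfies $z^{-d}G(z)=1-|P(z)|^2\ge 0$ for $z\in\mathbb{T}$; a short computation yields the self--inversive symmetry $G(z)=z^{2d}\,\overline{G(1/\bar z)}$, so the roots of $G$ come in reciprocal pairs $\{w,1/\bar w\}$, while nonnegativity of $z^{-d}G$ on $\mathbb{T}$ forces every root of $G$ lying on $\mathbb{T}$ to have even multiplicity. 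Taking from each reciprocal pair the root of modulus $\le 1$ (with circle roots counted with half their multiplicity) and fixing the leading constant so that moduli match on $\mathbb{T}$ gives a $Q$ with $G(z)=Q(z)Q^{\sharp}(z)$; evaluating on $\mathbb{T}$ then yields $z^{-d}G(z)=|Q(z)|^2$, i.e. $|P(e^{it})|^2+|Q(e^{it})|^2=1$.

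The last thing I would check is the degree. This construction returns $\deg(Q)\le d$, with equality exactly when $a_0\neq 0$ (equivalently when $h$ has trigonometric degree $d$). If $\deg(Q)=d'<d$ but $Q\not\equiv 0$, I replace $Q$ by $z^{d-d'}Q(z)$: this has degree exactly $d$ and the same modulus on $\mathbb{T}$, so it still works. The only remaining possibility is $Q\equiv 0$, i.e. $|P(e^{it})|\equiv 1$; then $P(z)P^{\sharp}(z)=z^{d}$ as a polynomial identity, which forces all roots of $P$ to vanish, so $P(z)=a_d z^{d}$ with $|a_d|=1$. I would dispose of this degenerate case separately (it is trivial, and in any event Theorem~\ref{thm:PolynomialOfUnitaries} only needs $\deg(Q)\le\deg(P)$).

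The main obstacle is the spectral-factorization step itself: establishing the self--inversive symmetry of $G$ together with the even order of its roots on $\mathbb{T}$, since this is precisely what licenses splitting the root multiset into two conjugate--reciprocal halves and thereby guarantees the existence of $Q$. Once that is in place, the degree bookkeeping is routine; and if one is willing to cite Fej\'er--Riesz as a black box, the argument shortens to the single observation that $1-|P(e^{it})|^2$ is a nonnegative trigonometric polynomial of degree $\le d$.
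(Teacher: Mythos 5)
Your argument is essentially the same Fej\'er--Riesz spectral factorization the paper uses: rewrite $1-|P|^2$ on $\mathbb{T}$ via a self-inversive algebraic polynomial of degree $\le 2d$, observe its roots come in conjugate--reciprocal pairs with circle roots of even multiplicity, and split that multiset to extract $Q$. You go a bit further on the degree bookkeeping, correctly noticing that the construction can yield $\deg Q<d$ (repairable by multiplying by a power of $z$) and that the remaining degenerate case $Q\equiv 0$ occurs exactly when $P(z)=a_d z^d$ with $|a_d|=1$ --- a case the paper's proof does not address, in which the stated equality $\deg Q=\deg P$ cannot in fact hold, though this does not affect how the theorem is used in Corollary~\ref{cor:PossiblePs}.
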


Before proving Theorem~\ref{thm:QExistence}, notice that this leads to the following corollary, which states that we can build any arbitrary (appropriately scaled) polynomial $P$ of $U$ using our technique.
\begin{corollary}\label{cor:PossiblePs}
 $\forall P\in \mathbb{C}[x]$, with $\text{deg}(P) = d$ if:\\
\begin{equation}
\forall x\in \mathbb{T}, \>|P(x)|^2 \leq 1
\end{equation}
Then $\exists\> \vec{\theta}, \vec{\phi} \in \mathbb{R}^{d+1}, \> \exists \lambda \in \mathbb{R}$ such that:
\begin{equation} 
\begin{bmatrix}
P(U) & .\>\>\>  \\
.\>\>\> & .\>\>\> \\
\end{bmatrix} = \left( \prod_{j=1}^{d} \mbox{$\large {R(\theta_{j}, \phi_{j}, 0) A}$} \right) R(\theta_0, \phi_0, \lambda)
\end{equation}
\end{corollary}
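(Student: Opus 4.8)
The plan is to obtain Corollary~\ref{cor:PossiblePs} by chaining together two results that are already available: the existence half of Theorem~\ref{thm:QExistence} and the reverse (sufficiency) direction of Theorem~\ref{thm:PolynomialOfUnitaries}. Concretely, suppose $P\in\mathbb{C}[x]$ with $\mathrm{deg}(P)=d$ satisfies $|P(x)|^2\le 1$ for all $x\in\mathbb{T}$. The first step is to invoke Theorem~\ref{thm:QExistence} to produce a companion polynomial $Q\in\mathbb{C}[x]$ with $\mathrm{deg}(Q)=\mathrm{deg}(P)=d$ and $|P(x)|^2+|Q(x)|^2=1$ for every $x\in\mathbb{T}$.

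The second step is simply to observe that the pair $(P,Q)$ now meets exactly conditions (1) and (2) in Theorem~\ref{thm:PolynomialOfUnitaries}: both polynomials have degree at most $d$, and the unit-circle normalization holds identically in $x\in\mathbb{T}$. The third step is then to feed this pair into the reverse direction of Theorem~\ref{thm:PolynomialOfUnitaries}, which yields angles $\vec{\theta},\vec{\phi}\in\mathbb{R}^{d+1}$ and $\lambda\in\mathbb{R}$ for which $\left(\prod_{j=1}^{d} R(\theta_j,\phi_j,0)\,A\right)R(\theta_0,\phi_0,\lambda)$ is a two-by-two block matrix whose $(1,1)$ block is precisely $P(U)$ (its $(2,1)$ block being $Q(U)$; the remaining blocks are immaterial here, which is why they appear as dots in the statement). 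With $R$ the general single-qubit rotation of~\eqref{eq:R}, this is exactly the claimed identity, and the corollary follows.

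For the corollary itself there is no real obstacle --- it is a pure composition of the two theorems. The conceptual weight sits entirely inside Theorem~\ref{thm:QExistence}, which I am permitted to assume. Were one forced to reprove that theorem, the crux would be a Fej\'er--Riesz-type spectral factorization: the map $t\mapsto 1-|P(e^{it})|^2$ is a nonnegative trigonometric polynomial (equivalently, a Laurent polynomial in $e^{it}$ that is real and nonnegative on $\mathbb{T}$), so by the Fej\'er--Riesz theorem it can be written as $|Q(e^{it})|^2$ for an ordinary polynomial $Q$; the delicate part is the degree bookkeeping --- securing $\mathrm{deg}(Q)=\mathrm{deg}(P)$ exactly rather than merely $\le d$, and handling the degenerate cases where $1-|P|^2$ vanishes on $\mathbb{T}$ or where $P$ has a depressed lowest-degree term --- which is the same flavor of coefficient matching already carried out in the proof of Theorem~\ref{thm:PolynomialOfUnitaries}.
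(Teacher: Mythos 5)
Your proposal is correct and matches the paper's own route: the corollary is stated as an immediate consequence of Theorem~\ref{thm:QExistence} (to produce the companion $Q$ with $\mathrm{deg}(Q)=d$ and $|P|^2+|Q|^2=1$ on $\mathbb{T}$) combined with the reverse direction of Theorem~\ref{thm:PolynomialOfUnitaries}. The paper gives no separate argument beyond this composition, so your proof is the intended one.
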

Corollary~\ref{cor:PossiblePs} is one of the most significant results in this paper, as it clearly establishes the superior expressivity of our method over traditional QSP when contrasted with ~\ref{thm:QSP_z}.

\begin{proof}[Proof of Theorem~\ref{thm:QExistence}]$ $\newline

 $\Longrightarrow$:
Let $P\in \mathbb{C}[x]$ with $|P|^2 \leq 1$ on $\mathbb{T}$, and define:
\begin{equation}
T(\theta)=|P(e^{i\theta})|^2
\end{equation}
Then T is a non negative trigonometric polynomial with terms from $-\textrm{deg}(P)$ to $\textrm{deg}(P)$ that satisfies $T \le 1$. Define $H=1-T$ as another non negative trigonometric polynomial of degree $\textrm{deg}(P)$. It remains to show there exists $Q$ a polynomial of degree $\textrm{deg}(P)$ such that $H(\theta)=|Q(e^{i\theta})|^2$.\\

\noindent Let $d = \textrm{deg}(P)$, then we can re-write $H$ as:
\begin{equation}
H(\theta)=e^{-id\theta}R(e^{i\theta})
\end{equation}
For some polynomial $R$ of degree $2d$. Since $H =H^*$, we have for all $|z|=1$:
\begin{equation}
z^{2d}R^*(z)=z^{2d}R^*(\frac{1}{z^*})=R(z)
\end{equation}
Using the identity theorem for analytic functions, this equality must hold for all $z \in \mathbb C - \{0\}$, making $R$ a self-inversive polynomial. Therefore, roots of $R$ are either on the unit circle, or grouped in pairs $(w_i, \frac{1}{w_i^*})$ for $|w_i|>1$.\\

 \noindent Notice $(e^{i\theta}-w_i)(e^{i\theta}-\frac{1}{w_i^*})=-\frac{e^{i\theta}}{w_i^*}\|e^{i\theta}-w_i\|^2$, so by grouping the roots $(w_i, \frac{1}{w_i^*})$ together, we get:\\
\begin{equation}
R(e^{i\theta})=c(-1)^me^{im\theta}\prod_{k=1}^{m}\frac{1}{w_k^*}\>\prod_{k=1}^{m}\|e^{i\theta}-w_k\|^2\prod_{k=1}^{2d-2m}(e^{i\theta}-e^{i\theta_k})
\end{equation}
Where $m$ is the number of such pairs. We can then write:
\begin{equation}
H(\theta)=e^{-id\theta}R(e^{i\theta})=|G(e^{i\theta})|^2\hat{H}(\theta)
\end{equation}
where $G$ is a polynomial of degree $m$ and $\hat{H}$ is a non negative trigonometric polynomial with roots only on the unit circle, that maps the unit circle into $[0, \infty)$. Just like $H$, we can re-write $\hat{H}$ as:

\begin{equation}
\hat{H}(\theta)=e^{i(d-m)\theta}\hat{R}(e^{i\theta})
\end{equation}

 For some polynomial $\hat{R}$ of degree $2d - 2m$. Extending $\hat{H}$ to an analytic function on a small annulus including the unit circle, we can use the local form of the analytic function near the zeros. This implies a curve passing through a zero of $\hat{R}$ is sent either to a curve through 0, or a line segment ending at 0 around a small neighborhood of the zero depending on the parity of the zero. However, since $\hat{H}$ has all roots on the unit circle, then the unit circle itself passes through all zeros and is mapped into $[0, \infty)$, which implies all zeros have even multiplicity. Hence, we can re-write $\hat{R}$ as:

\begin{equation}
\hat{R} = \hat{G}^2
\end{equation}

For some polynomial $\hat{G}$ of degree $d - m$. Then on the complex unit circle, we have:

\begin{equation}
\hat{H}(\theta) = |\hat{H}(\theta)|= |e^{i(d-m)}\hat{R}(e^{i\theta})| = |\hat{G}(e^{i\theta})^2| = |\hat{G}(e^{i\theta})|^2
\end{equation}
Thus:
\begin{equation}
    H(\theta)=|G(e^{i\theta})|^2\cdot|\hat{G}(e^{i\theta})|^2= |(G\hat{G})(e^{i\theta})|^2
\end{equation}
Then letting $Q = G\hat{G}$ we have:
\begin{equation}
    H(\theta) = |Q(e^{i\theta})|^2
\end{equation}
\begin{equation}
\Longrightarrow\>\> |P(e^{i\theta})|^2 + |Q(e^{i\theta})|^2 = 1
\end{equation}
and:
\begin{equation}
\Longrightarrow\>\> \textrm{deg}(Q) = \textrm{deg}(G)+\textrm{deg}(\hat{G}) = m + d - m = d = \textrm{deg}(P)
\end{equation}
 which proves the forward direction of the proof.

 $\Longleftarrow$: This is trivial since $|Q(e^{i\theta})|^2\geq 0$.\\
\end{proof}

So far we have only discussed the construction of polynomials of positive degree, however, the framework presented here can also be used to implement polynomials with any ratio of positive and negative degrees by making use of a secondary signal operator which we'll define as: 
\begin{equation}
    A^{'} = (\vert 0\rangle\langle0\vert \otimes I)+(\vert 1\rangle\langle1\vert \otimes U^\dag) = \begin{bmatrix}

I & 0  \\

0 & U^\dag \\
\end{bmatrix}
\end{equation}
Particularly, if the set of rotation angles $\vec{\theta},\, \vec{\phi}$, and $\lambda$ lead to the polynomials $P(x) = \sum_{n = 0}^d a_n\, e^{inx}$ and $Q(x) = \sum_{n = 0}^d b_n\, e^{inx}$, using our original signal operator $A$, then we can construct polynomials $P^{'}(x) = e^{-ikx} P(x) = \sum_{n = -k}^{d-k} a_{n+k}\, e^{inx}$ and $Q^{'}(x) = e^{-ikx} Q(x) = \sum_{n = -k}^{d-k} b_{n+k}\, e^{inx}$ for $k\leq d$, using the same set of rotation angles by replacing any $k$ instances of $A$ with the complementary signal operator $A^{'}$.
\begin{theorem}[Polynomials With Negative Powers]\label{thm:positive_negative}
$\forall d,k\in \mathbb{N},\,\forall \vec{\theta}, \vec{\phi} \in \mathbb{R}^{d+1},\>\lambda \in \mathbb{R}$ and $k\leq d$ we have:\\
{\small
\begin{equation}
    \begin{bmatrix}
        P^{'}(U) & .\>\>\>  \\
        Q^{'}(U) & .\>\>\> \\
    \end{bmatrix}=
\left[\prod_{j=1}^{k} \mbox{$\large {R(\theta_{d-k+j}, \phi_{d-k+j}, 0) A^{'}}$}\right]
\left[\prod_{j=1}^{d-k} \mbox{$\large {R(\theta_{j}, \phi_{j}, 0) A}$} \right]R(\theta_0, \phi_0, \lambda)
\end{equation}
}
\noindent If and only if:
{\small
\begin{equation}
    \begin{bmatrix}
        P(U) & .\>\>\>  \\
        
        Q(U) & .\>\>\> \\
    \end{bmatrix}=\left[\prod_{j=1}^{d} \mbox{$\large {R(\theta_{j}, \phi_{j}, 0) A}$} \right] R(\theta_0, \phi_0, \lambda)
\end{equation}
}
For $P^{'}(e^{ix}) = e^{-ikx} P(e^{ix})$ and $Q^{'}(e^{ix}) = e^{-ikx} Q(e^{ix})$
\end{theorem}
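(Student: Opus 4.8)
\emph{The plan} is to reduce the statement to an elementary commutation argument, exploiting the fact that $A'$ and $A$ differ only by a fixed system‑register unitary. The \emph{key observation} is that
\begin{equation}
A' \;=\; (I\otimes U^\dagger)\,A ,
\end{equation}
which one checks directly: $(I\otimes U^\dagger)\big(\ket0\bra0\otimes U + \ket1\bra1\otimes I\big) = \ket0\bra0\otimes I + \ket1\bra1\otimes U^\dagger = A'$. Moreover $I\otimes U^\dagger$ commutes with \emph{every} operator appearing in the GQSP product: it commutes with each single‑qubit rotation of the form~\eqref{eq:R} because those act trivially on the system register, and it commutes with $A$ (and with $A'$) because $A$, $A'$ and $I\otimes U^\dagger$ are all block‑diagonal in the ancilla basis with blocks that are functions of the single unitary $U$, which mutually commute.

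Granting this, I would proceed as follows. Starting from the modified product on the left‑hand side of the theorem, use $A' = (I\otimes U^\dagger)A$ to rewrite each of the $k$ occurrences of the complementary signal operator. Using the commutation relations just noted, slide all $k$ factors of $I\otimes U^\dagger$ to the far left of the product; their order is immaterial, and indeed it is also immaterial \emph{which} $k$ of the $d$ signal operators were swapped — the choice of the last $k$ in the statement is only for concreteness. Collecting them yields $I\otimes U^{\dagger k}$ times exactly the unmodified GQSP product $\big(\prod_{j=1}^{d} R(\theta_j,\phi_j,0)A\big)R(\theta_0,\phi_0,\lambda)$. By hypothesis this unmodified product is the block encoding whose first column has blocks $P(U)$ and $Q(U)$, so the modified product has first‑column blocks $U^{\dagger k}P(U)$ and $U^{\dagger k}Q(U)$. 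Identifying these with $P'(U)$ and $Q'(U)$ is then immediate: writing $P(e^{ix})=\sum_{n=0}^{d}a_n e^{inx}$ one has $e^{-ikx}P(e^{ix})=\sum_{n=0}^{d}a_n e^{i(n-k)x}$, i.e. $P'(U)=U^{-k}P(U)=U^{\dagger k}P(U)$, and likewise for $Q'$; since $k\le d$ the resulting exponents lie in $[-k,\,d-k]$ as claimed.

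The converse direction requires no extra work: left‑multiplying the modified product by the fixed unitary $I\otimes U^{k}$ restores the original product, and correspondingly replacing $P',Q'$ by $e^{ikx}P',e^{ikx}Q'$ restores $P,Q$, so either identity implies the other. I expect \emph{no genuine obstacle} here; the only point that calls for a moment's care is verifying that $I\otimes U^\dagger$ really commutes past the \emph{controlled} operator $A$ — it does, precisely because $U$ and $U^\dagger$ commute — so that, unlike the proof of Theorem~\ref{thm:PolynomialOfUnitaries}, this argument needs neither induction nor any bookkeeping of polynomial degrees.
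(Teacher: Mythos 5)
Your proposal is correct and follows essentially the same route as the paper: factor $A' = (I\otimes U^\dagger)A$, commute the $I\otimes U^\dagger$ factors to the far left (they commute with the ancilla rotations and with $A$), collect them into $I\otimes U^{-k}$, and then apply the original GQSP identity. Your added remark that the choice of \emph{which} $k$ signal operators are replaced is immaterial is a small but correct observation beyond what the paper states explicitly.
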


\begin{proof}[Proof of Theorem~\ref{thm:positive_negative}]$ $\\

\noindent First notice:
\begin{equation}
    A^{'} = \left( I\otimes U^\dag\right)A
\end{equation}
Hence we have 
\begin{align}
    &\left[ \prod_{j=1}^{k} \mbox{$\large {R(\theta_{d-k+j}, \phi_{d-k+j}, 0) A^{'}}$} \right]\left[\prod_{j=1}^{d-k} \mbox{$\large {R(\theta_{j}, \phi_{j}, 0) A}$} \right] R(\theta_0, \phi_0, \lambda) \\&= \left[ \prod_{j=1}^{k} \mbox{$\large {R(\theta_{d-k+j}, \phi_{d-k+j}, 0) \left( I\otimes U^\dag\right)A}$} \right]\left[ \prod_{j=1}^{d-k} \mbox{$\large {R(\theta_{j}, \phi_{j}, 0) A}$} \right] R(\theta_0, \phi_0, \lambda)
\end{align}
Since $\left( I\otimes U^\dag\right)$ commutes with both $A$ and $R(\theta, \phi, 0)$ we then have:
\begin{equation}
    = \left( I\otimes U^\dag\right)^k\left[ \prod_{j=1}^{k} \mbox{$\large {R(\theta_{d-k+j}, \phi_{d-k+j}, 0) A}$} \right]\left[ \prod_{j=1}^{d-k} \mbox{$\large {R(\theta_{j}, \phi_{j}, 0) A}$} \right] R(\theta_0, \phi_0, \lambda)
\end{equation}
\begin{equation}
    = \left( I \otimes U^{-k}\right)\left[ \prod_{j=1}^{d} \mbox{$\large {R(\theta_{j}, \phi_{j}, 0) A}$} \right] R(\theta_0, \phi_0, \lambda)
\end{equation}
\begin{equation}
    = \begin{bmatrix}

U^{-k} & 0  \\

0 & U^{-k} \\
\end{bmatrix}\begin{bmatrix}
        P(U) & .\>\>\>  \\
        
        Q(U) & .\>\>\> \\
    \end{bmatrix}=\begin{bmatrix}
        P^{'}(U) & .\>\>\>  \\
        
        Q^{'}(U) & .\>\>\> \\
    \end{bmatrix}
\end{equation}
    
\end{proof}

In this section, we gave a detailed description of our generalized QSP technique and proved its first advantage over traditional QSP. Namely, we showed that our method overcomes the restrictions placed on the family of polynomials that can be built using QSP, which eliminates the need for LCU and oblivious amplitude amplification (OAA). In the next section, we will demonstrate the second advantage of our formalism over traditional QSP, which is, the computation of the rotation angles required to build a given polynomial is much more efficient and conceptually simpler within our framework.

\section{Calculating Rotation Angles}\label{sec:CalculatingPhases}
In this section, we first provide a simple method to calculate the parameters of our algorithm given desired polynomials $P$ and $Q$. This is a significant advance because existing approaches for computing the rotation angles in QSP are quite involved. This approach on the other hand is elementary and as a result, allows the approach in this paper to be more easily understood by students and more easily deployed in practice. We then provide an extremely efficient optimization algorithm for finding a corresponding $Q$ given a desired $P$ since in practice one is often only interested in implementing a particular polynomial $P$. We also provide astonishing numerical results that showcase the efficiency of our optimization method. Notably, we're able to find $Q$ for randomly chosen polynomials $P$ of up to $2^{24}\sim 16.8$ million degrees in less than $40$ seconds on an A100 GPU. This is a remarkable advancement given that previous methods have only been able to find rotation angles for polynomials of up $\sim 10^4$ degrees \cite{dong2021efficient}.\\

Suppose $P$ and $Q$ are given to us as a $2\times d$ matrix $S$ where the first row contains the coefficients of $P$ and the second row contains coefficients of $Q$. Now we'll use the induction step in the proof of ~\ref{thm:PolynomialOfUnitaries} to formulate a simple recursive algorithm by setting $\theta_d = \tan^{-1}(\frac{|b_d|}{|a_d|})$ and $\phi_d = \textrm{Arg}(\frac{a_d}{b_d})$, calculating $\hat{P}$ and $\hat{Q}$, and calling the function recursively. Calculating $\hat{P}$ and $\hat{Q}$ in this matrix representation corresponds to multiplying $S$ by $R(\theta_d, \phi_d, 0)$, and shifting the top row of the matrix which would correspond to multiplying by $A^{\dag}$ in our original formulation. The pseudo-code for this algorithm is shown in ~\ref{alg:pseudo}. As we can see, this algorithm is very simple and easy to understand, however, the algorithm assumes $P$ and $Q$ satisfy the requirements of theorem ~\ref{thm:PolynomialOfUnitaries}, which is not a trivial task.

\begin{algorithm}[t]
\caption{Pseudocode for the computation of the phase factors in GQSP.\label{alg:pseudo}}
\begin{algorithmic}
\State $a_d, b_d = S[0][d], S[1][d]$
\State $\theta_d = \tan^{-1}\left(\frac{|b_d|}{|a_d|}\right)$
\State $\phi_d = \textrm{Arg}\left(\frac{a_d}{b_d}\right)$
\State
\If{$d = 0$}
    \State $\lambda = \textrm{Arg}(b_d)$
    \State \Return $(\theta_0, \phi_0, \lambda)$
\EndIf
\State
\State $S = R(\theta_d, \phi_d, 0) \cdot S$
\State $\hat{S} = [S[0][1:d], S[1][0:d-1]]$
\State $\Vec{\theta}_{d-1}, \Vec{\phi}_{d-1}, \lambda = \textrm{ComputeParameters}(\hat{S}, d - 1)$
\State
\State \Return $(\Vec{\theta}_{d-1}, \theta_d), (\Vec{\phi}_{d-1}, \phi_d), \lambda$
\end{algorithmic}
\end{algorithm}

In theorem ~\ref{thm:QExistence} we showed that as long as $\forall x\in \mathbb{T}, \>|P(x)|^2 \leq 1$, there exists a $Q$ such that together with $P$ they satisfy the conditions of ~\ref{thm:PolynomialOfUnitaries}. One way to find such a $Q$ is through root finding by following the idea laid out in the proof of ~\ref{thm:QExistence}. However, root finding algorithms can be computationally expensive for polynomials of large degree, so here we lay out a simple and cheap optimization method for finding $Q$. Let $P(e^{it}) = \sum_{n = 0}^d a_n e^{int}$, and $Q(e^{it}) = \sum_{n = 0}^d b_n e^{int}$. In other words, $P$ and $Q$ are the discrete-time Fourier transformations of the coefficients $\{a_n\}_n$ and $\{b_n\}_n$. To be more rigorous $P$ and $Q$ are the Fourier transforms of impulse trains:
\begin{equation}
    P(e^{it}) = \mathcal{F}\left \{\sum_{n=-\infty}^{\infty} a_n \cdot \delta(t-\frac{n}{2\pi})\right \}
\end{equation}
\begin{equation}
    Q(e^{it}) = \mathcal{F}\left \{\sum_{n=-\infty}^{\infty} b_n \cdot \delta(t-\frac{n}{2\pi})\right \}
\end{equation}
Where $a_n = b_n = 0$ for $n > d$ or $n < 0$. The conditions of ~\ref{thm:PolynomialOfUnitaries} also require:
\begin{equation}
    \|P(e^{it})\|^2 + \|Q(e^{it})\|^2 = P(e^{it})P^*(e^{it}) + Q(e^{it})Q^*(e^{it}) = 1
\end{equation}
Then taking the inverse Fourier transform of both sides of the above equation, and applying the convolution theorem we get:
\begin{equation}
    \left (\sum_{n=-\infty}^{\infty} a_n \cdot \delta(t-\frac{n}{2\pi})\right) \star \left (\sum_{n=-\infty}^{\infty} a_n^* \cdot \delta(t+\frac{n}{2\pi})\right ) + \left (\sum_{n=-\infty}^{\infty} b_n \cdot \delta(t-\frac{n}{2\pi})\right) \star \left (\sum_{n=-\infty}^{\infty} b_n^* \cdot \delta(t+\frac{n}{2\pi})\right ) = \delta
\end{equation}
Where $\delta$ is the unit impulse, and $\star$ is the convolution operator. The above implies that if we're given $P$ and $Q$ as arrays of coefficients the following must hold:
\begin{equation}
    [a_0, a_1, ..., a_d] \star [a_d^*, a_{d-1}^*, ..., a_0^*]  + [b_0, b_1, ..., b_d] \star [b_d^*, b_{d-1}^*, ..., b_0^*] = [0, 0, ..., 0, 1, 0, ..., 0, 0]
\end{equation}
Where $[0, 0, ..., 0, 1, 0, ..., 0]$ is an array of length $2d + 1$ with 1 at the center and $d$ zeros on each side. Hence by letting $\Vec{a} = [a_0, a_1, ..., a_d]$, $\Vec{b} = [b_0, b_1, ..., b_d]$, and $\Vec{\delta} = [0, 0, ..., 0, 1, 0, ..., 0, 0]$ we have:
\begin{equation}
     \Vec{a} \star \text{reverse}(\Vec{a})^*+ \Vec{b} \star \text{reverse}(\Vec{b})^*  = \Vec{\delta}
\end{equation}
Here $\text{reverse}(\cdot)$ refers to the reversal of an array wherein the last element is first and vice versa.
Therefore, we can set up our optimization problem as:

\begin{equation}
    \text{argmin}_{\Vec{b}}\> \| \Vec{a} \star \text{reverse}(\Vec{a})^*+ \Vec{b} \star \text{reverse}(\Vec{b})^*  - \Vec{\delta}\|^2
\end{equation}
The above objective function can be evaluated in time $\mathcal{O}(d\log d)$ using FFT-based convolution algorithms. This almost linear runtime of the objective function makes the optimization extremely efficient as demonstrated in~\ref{fig:time_comparison}. Notably, it lets us find the coefficients of $Q$ for random polynomials of up to degree $2^{24}\sim 1.68 \times 10^7$ in less than $40$ seconds on an A100 GPU. This is a significant improvement over the degree 10,000 polynomials that can be achieved using existing QSP approaches via the  state-of-the-art techniques of \cite{dong2021efficient}. For completeness, we have provided our code\footnote{\href{https://github.com/Danimhn/GQSP-Code}{\faGithub} Our code for finding the complementary polynomial $Q$.} for the above optimization algorithm used to generate~\ref{fig:time_comparison}.

\begin{figure}[tp]
\centering
\includegraphics[width=0.8\textwidth]{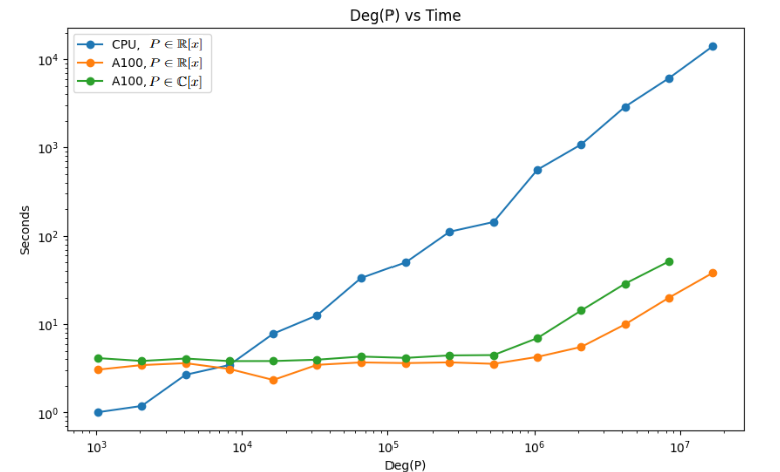}
\caption{Computational time required to complete the optimization as a function of the degree of the polynomial on CPU vs GPU. We can see that even on a CPU, we have an almost linear scaling of computational resources due to the $\mathcal{O}(n\log n)$ scaling of FFT-based convolution. Furthermore, the plot showcases the superior scalability of GPUs for the optimization problem.}
\label{fig:time_comparison}
\end{figure}

\section{\label{sec:PhaseFunctions} Phase Functions:}
A major application of quantum signal processing involves constructing phase functions of input unitaries.  Specifically, these approaches give a way to implement a transformation of an input phase via:

\begin{equation}
    U = e^{iH} \Longrightarrow e^{if(H)}
\end{equation}
This sort of transformation is the root of the exponential improvements in accuracy found for the quantum linear systems algorithm and also is the central idea behind fixed point amplitude amplification as well as recent QSP based approaches to error mitigation.
The following theorem will provide us with the building blocks to achieve such transformations in the phase.

This idea is widely used within simulation and other areas~\cite{low2017hamiltonianQSP,Yoder2014FixedpointSearch,QSVT}; however our approach provides a substantial simplification in cases where the fourier series used (equivalently a Laurent polynomial in the phase~\cite{Martyn2021GrandUnification}) is of mixed parity.  We provide a host of examples of this reasoning below and show how our generalization of phase estimation can be used to simplify the solution to the following problems.

\subsection{Application to Simulation}
The first and most obvious application of our framework is Hamiltonian simulation with the same query complexity as that of qubitization~\cite{low2019hamiltonian}.
As a first step, we need to note that the function that we wish to implement is the exponential of a cosine.  The complexity of this is given below

\begin{theorem}\label{thm:TrigPhaseFunctions}
    Given a unitary matrix $U = e^{iH}$, we can implement an $\epsilon$-approximation of $e^{it\sin H}$ and $e^{it\cos H}$ for $t\in \mathbb{R}$ using $\mathcal{O}(t + \log({\large\frac{1}{\epsilon}})/\log\log(\frac{1}{\epsilon}))$  controlled-$U$ and $2$-qubit operations while using a single ancillary qubit. 
\end{theorem}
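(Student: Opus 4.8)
The plan is to reduce the theorem to the Jacobi--Anger expansion together with Corollary~\ref{cor:PossiblePs} and Theorem~\ref{thm:positive_negative}. Writing $\theta$ for an eigenphase of $U$, recall the classical identities $e^{it\cos\theta}=\sum_{k=-\infty}^{\infty} i^{k} J_k(t)\,e^{ik\theta}$ and $e^{it\sin\theta}=\sum_{k=-\infty}^{\infty} J_k(t)\,e^{ik\theta}$, where $J_k$ is the Bessel function of the first kind. Since each summand is a monomial $e^{ik\theta}$, i.e. a (possibly negative) power of $U$, the target is literally a Laurent series in $U$; note this series has mixed parity, which is precisely the regime where GQSP avoids the LCU/OAA overhead of ordinary QSP.

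First I would truncate at $|k|\le K$ and control the tail. Using $|J_k(t)|\le (t/2)^{|k|}/|k|!$, a Stirling estimate gives $\sum_{|k|>K}|J_k(t)|\le\epsilon/2$ once $K=\mathcal{O}\big(t+\log(1/\epsilon)/\log\log(1/\epsilon)\big)$; this is the step responsible for the stated $\log(1/\epsilon)/\log\log(1/\epsilon)$ scaling and is the only quantitatively delicate point. Let $L(z)=\sum_{|k|\le K}c_k z^{k}$ be the truncated Laurent polynomial, with $c_k=i^{k}J_k(t)$ or $c_k=J_k(t)$ accordingly. On $\mathbb{T}$ one has $\|L-f\|_\infty\le\epsilon/2$, hence $|L(z)|\le 1+\epsilon/2$ there; rescaling $\tilde L:=L/(1+\epsilon/2)$ yields $|\tilde L|\le 1$ on $\mathbb{T}$ while keeping $\|\tilde L-f\|_\infty\le\epsilon$.

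Next I would recast $\tilde L$ as an ordinary polynomial to feed the existing machinery: set $P(z):=z^{K}\tilde L(z)=(1+\epsilon/2)^{-1}\sum_{n=0}^{2K}c_{n-K}z^{n}$, a polynomial of degree $2K$ with $|P(z)|=|\tilde L(z)|\le 1$ on $\mathbb{T}$. By Corollary~\ref{cor:PossiblePs} (equivalently, Theorem~\ref{thm:QExistence} produces a complementary $Q$ of degree $2K$ and then Theorem~\ref{thm:PolynomialOfUnitaries} the angles) there exist $\vec\theta,\vec\phi\in\mathbb{R}^{2K+1}$ and $\lambda\in\mathbb{R}$ whose GQSP circuit block-encodes $P(U)$ using $2K$ copies of $A$ and $2K+1$ single-qubit rotations. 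Finally, apply Theorem~\ref{thm:positive_negative} with the shift $k=K\le 2K$: replacing $K$ of the $A$'s by $A'$ (inserting $K$ controlled-$U^{\dagger}$'s) converts $P(U)$ into $P'(U)=U^{-K}P(U)=\tilde L(U)$, which by the spectral theorem ($\|g(U)-h(U)\|\le\|g-h\|_{\infty,\mathbb{T}}$) $\epsilon$-approximates $e^{it\cos H}$ (resp.\ $e^{it\sin H}$) in the top-left block. The resource count then follows: $2K=\mathcal{O}\big(t+\log(1/\epsilon)/\log\log(1/\epsilon)\big)$ controlled applications of $U$ and $U^{\dagger}$, the same order of single-qubit rotations, and one ancilla.

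The main obstacle is the Bessel-coefficient tail bound yielding the $\log(1/\epsilon)/\log\log(1/\epsilon)$ term rather than merely $\log(1/\epsilon)$; everything else is assembly of results already established in the paper. A secondary bookkeeping point is checking that truncation and the $(1+\epsilon/2)^{-1}$ rescaling together contribute at most $\epsilon$ in operator norm and that $\cos H,\sin H$ are unambiguously determined by $U=e^{iH}$ (true, since $\cos,\sin$ are $2\pi$-periodic), both of which are routine.
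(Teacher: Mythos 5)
Your proposal takes essentially the same approach as the paper: Jacobi--Anger expansion, exponential decay of the Bessel coefficients giving $K=\mathcal{O}(t+\log(1/\epsilon)/\log\log(1/\epsilon))$, and GQSP (Corollary~\ref{cor:PossiblePs} with the shift of Theorem~\ref{thm:positive_negative}) to block-encode the truncated Laurent series. Your write-up is in fact more careful than the paper's terse sketch, since you explicitly control the tail sum rather than just individual terms, handle the $(1+\epsilon/2)^{-1}$ rescaling to keep $|P|\le 1$ on $\mathbb{T}$, and spell out the $z^{K}$ shift needed to move from a Laurent polynomial to an ordinary polynomial compatible with Corollary~\ref{cor:PossiblePs}.
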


\begin{proof}%[Proof of Theorem~\ref{thm:TrigPhaseFunctions}]
This can be easily implemented through the  use of the Jacobi-Anger expansions:
\begin{equation}
    e^{it\cos{\theta}} = \sum_{n=-\infty}^\infty i^n J_n(t)e^{in\theta}, \>\>\>\> e^{it\sin{\theta}} = \sum_{n=-\infty}^\infty J_n(t)e^{in\theta}
\end{equation}
Where $J_n(x)$ is the $n^{th}$ Bessel function of the first kind. The coefficients of the infinite sum are known to decrease exponentially fast, specifically
\begin{equation}
    J_n(t) \in \Theta\left(\frac{\left(\frac{t}{2} \right)^n}{n!} \right) \subseteq \mathcal{O}\left(\frac{\left(\frac{et}{2} \right)^n}{n^{n+1/2}}  \right) \subseteq \mathcal{O}\left(\left(\frac{et}{2n} \right)^n  \right)
\end{equation}
Thus for any fixed $t\ge 0$ we can choose $n' \ge t$ in which case the function is exponentially decreasing for all $n\ge n'$.  Then given that we choose $n'$ to lead to exponential decay we can find for any $\epsilon>0$ that there exists $n'\in\mathcal{O}(t+\log(1/\epsilon)/\log\log(1/\epsilon))$ such that for all $n\ge n'$ $J_n(t) \le \epsilon$.
Allowing us to build $e^{it\sin H}$ and $e^{it\cos H}$ with accuracy $\epsilon$ with a polynomial of only $\mathcal{O}(t + \log({\large\frac{1}{\epsilon}})/\log\log({\large\frac{1}{\epsilon}}))$ degree. 
\end{proof}

\begin{corollary}\label{cor:sim}
Let $H=\sum_{j} \alpha_j U_j$ be a Hermitian matrix where $\alpha_j \ge 0$ and $U_j \in U(2^n)$.  Further, let $\mathtt{PREPARE} : \ket{0} \mapsto \sum_j \sqrt{\alpha_j} \ket{j} / \alpha$ and $\mathtt{SELECT} \ket{j} \ket{\psi} = \ket{j} U_j \ket{\psi}$ be unitary matrices in $U(2^m)$ and $U(2^{n+m})$ and finally let $W = - (1- 2\mathtt{PREPARE} \ket{0}\!\bra{0} \mathtt{PREPARE}^\dagger)\mathtt{SELECT}$.  Under these assumptions we can, for any $t> 0$ and $\epsilon>0$, construct a unitary matrix $V \in U(2^{n+m+1})$ that block-encodes $e^{-iHt}$ within error $\epsilon$ using $\mathcal{O}( \alpha t + \log(1/\epsilon)/\log\log(1/\epsilon))$ applications of $W$.
\end{corollary}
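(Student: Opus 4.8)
The plan is to reduce this to Theorem~\ref{thm:TrigPhaseFunctions} via qubitization. First I would invoke the standard qubitization/block-encoding machinery: given the $\mathtt{PREPARE}$ and $\mathtt{SELECT}$ oracles, the operator $W = -(1 - 2\,\mathtt{PREPARE}\ket{0}\!\bra{0}\mathtt{PREPARE}^\dagger)\,\mathtt{SELECT}$ acts, within each two-dimensional invariant subspace labeled by an eigenvalue $\lambda$ of $H/\alpha$, as a rotation by angle $\pm\arccos(\lambda/\alpha)$ — equivalently, restricted to the block-encoded subspace, $W$ is (up to the usual sign bookkeeping) a unitary of the form $e^{i\arccos(H/\alpha)}$ on the relevant qubitized space. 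So in the language of our GQSP framework $W$ plays the role of the signal unitary $U = e^{iH'}$ with $H' = \arccos(H/\alpha)$ (the effective Hamiltonian seen inside the Szegedy walk space).

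Next I would apply Theorem~\ref{thm:TrigPhaseFunctions} with this $U = W$: it gives an $\epsilon$-approximation of $e^{it'\cos H'}$ using $\mathcal{O}(t' + \log(1/\epsilon)/\log\log(1/\epsilon))$ controlled applications of $W$ and $2$-qubit gates, with one ancilla. Since $\cos H' = \cos(\arccos(H/\alpha)) = H/\alpha$, choosing $t' = \alpha t$ yields a block-encoding of $e^{-i(H/\alpha)(\alpha t)} = e^{-iHt}$ (fixing the sign via the choice of $t'$ or a trivial conjugation), using $\mathcal{O}(\alpha t + \log(1/\epsilon)/\log\log(1/\epsilon))$ applications of $W$. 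The error budget is handled by the triangle inequality: the Jacobi–Anger truncation error from Theorem~\ref{thm:TrigPhaseFunctions} is the only source of error in the phase function itself, and it composes additively with the (exact) qubitization step, so demanding truncation error $\epsilon$ suffices.

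The main obstacle — and the only genuinely delicate point — is the care needed around the qubitization correspondence: one must verify that $W$ restricted to the invariant subspaces is exactly a single-qubit rotation of the form assumed by our signal model (rather than $W^2$, or a Laurent polynomial in $e^{\pm i\arccos(\cdot)}$ as in Laurent-polynomial formulations of QSVT), that eigenvalues of $H$ at $\pm\alpha$ (where the subspace degenerates) are handled, and that applying a polynomial in $W$ in our convention correctly implements the even function $\cos$ of the walk angle without incurring parity obstructions. This is precisely where GQSP simplifies matters relative to standard QSP: because Theorem~\ref{thm:PolynomialOfUnitaries} imposes no parity constraint, the mixed-parity Fourier series for $e^{it\cos\theta}$ can be implemented directly as a polynomial in $W$ (plus the negative-power trick of Theorem~\ref{thm:positive_negative} to center the Laurent polynomial), with no LCU or oblivious amplitude amplification overhead — so the query count is exactly the polynomial degree, $\mathcal{O}(\alpha t + \log(1/\epsilon)/\log\log(1/\epsilon))$, matching qubitization.
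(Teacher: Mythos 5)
Your proposal follows essentially the same route as the paper's proof: identify the two-dimensional invariant (qubitized) subspaces of $W$ in which it acts as a rotation with eigenphases $\pm\arccos(\lambda_j/\alpha)$, apply Theorem~\ref{thm:TrigPhaseFunctions} to implement $e^{-i\alpha t\cos\theta}$ on the walk operator, and use $\cos(\arccos(\lambda_j/\alpha)) = \lambda_j/\alpha$ to recover $e^{-i\lambda_j t}$ on each subspace, then sum by linearity. The delicate points you flag (that $W$ restricted to each invariant block is genuinely a single-qubit rotation, that mixed-parity Laurent series pose no obstruction in GQSP, and that the Jacobi--Anger series' negative powers can be absorbed via the $A'$ operator of Theorem~\ref{thm:positive_negative}) are precisely the ones the paper handles explicitly via the matrix $W_{\lambda_j}$ and the block-decomposition argument, so your reasoning is aligned with the intended proof.
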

\begin{proof}
    Existing work~\cite{low2017hamiltonianQSP,QSVT} has shown that for each eigenvector $\ket{\lambda_j}$ of $H$ such that $H \ket{\lambda_j} = \lambda_j \ket{\lambda_j}$ there exists two eigenvectors of $W$ that can be conveniently expressed within the span of $\mathtt{PREPARE}\ket{0} \ket{\lambda_j}$ and $W\mathtt{PREPARE}\ket{0} \ket{\lambda_j}$ such that we define the orthogonal component to $\mathtt{PREPARE}\ket{0} \ket{\lambda_j}$ to be $\ket{\phi_j}$ and also define the action of $W$ restricted to this two dimensional subspace is $W_{\lambda_j}$ (taking the former vector to correspond to $\ket{0}$ and the latter to $\ket{1}$) 
    \begin{equation}
    W_{\lambda_j} = \begin{bmatrix} \frac{\lambda_j}{\alpha} & \sqrt{1-\lambda_j^2/\alpha^2} \\ -\sqrt{1-\lambda_j^2/\alpha^2} & \frac{\lambda_j}{\alpha} \end{bmatrix}
    \end{equation}
    Next note that for this unitary we have that the eigenvalues of this operation are $e^{\pm i \arccos(\lambda_j /\alpha)}$ and let us denote the eigenvectors of this to be $\ket{\phi_j^{\pm}}$.  Then setting $A = \ket{0}\!\bra{0}\otimes W + \ket{1}\!\bra{1} \otimes I^{\otimes n}$ we can apply Corollary~\ref{cor:PossiblePs} to perform for any degree $d$ polynomial $P$ such that $\max \|P(U)\| \le 1$
    \begin{equation}
        W_{\lambda_j} \mapsto \begin{bmatrix} P(W_{\lambda_j}) & \cdot \\ \cdot & \cdot \end{bmatrix}.
    \end{equation}
    In our case we wish to choose $P(U) = e^{-i\alpha t H}$. We can then use the result of Theorem~\ref{thm:TrigPhaseFunctions} to find an expansion within error $\epsilon$ using $\mathcal{O}( \alpha t + \log(1/\epsilon)/\log\log(1/\epsilon))$ queries to the oracle $W$.  We can then see that by making this transformation we can map
    \begin{equation}
        W_{\lambda_j} \mapsto \begin{bmatrix} e^{-i \alpha t \cos( \arccos(\lambda_j/\alpha))} &0 \\ 0& e^{-i \alpha t \cos(-\arccos(\lambda_j/\alpha))} \end{bmatrix} =\begin{bmatrix} e^{-i \lambda_j t} &0 \\ 0& e^{-i \lambda_j t} \end{bmatrix} := F_{\lambda_j}.
    \end{equation}
    Now let us consider the input vector $\mathtt{PREPARE}\ket{0} \ket{\lambda_j} = a \ket{\phi_j^+} + b \ket{\phi_j^-}$.  Then we have that the transformed block encoding obeys
    \begin{equation}
        (\bra{0}\otimes I)F_{\lambda_j}(\mathtt{PREPARE}\ket{0} \ket{\lambda_j}) = a e^{-i \lambda_j t}\ket{\phi_j^+} + b e^{-i \lambda_j t} \ket{\phi_j^-} = e^{-i \lambda_j t} \mathtt{PREPARE} \ket{0} \ket{\lambda_j},
    \end{equation}
    which is logically equivalent under local isometries to the evolution of the system.

    As this operation works for any two-dimensional matrix $W_{\lambda_j}$ it is straightforward to note that the same procedure must also work by linearity on $\bigoplus_j W_{\lambda_j}$ as well. Specifically, let $F = \oplus_j F_{\lambda_j}$ be the operation that we get when applying the above transformation to the entire Hilbert space of $H$ and the ancillary qubit.  Let $\ket{\psi}=\sum_j a_j \ket{\lambda_j}$ then
    \begin{equation}
        F \,\mathtt{PREPARE} \ket{0} \ket{\psi} = \sum_{j} F_{\lambda_j}\mathtt{PREPARE}\ket{0} \ket{\lambda_j} = \mathtt{PREPARE}\ket{0}e^{-iHt}\ket{\psi}.
    \end{equation}
    In order to estimate the cost of this procedure, let us consider the number of queries made to $\mathtt{PREPARE}$ and $\mathtt{SELECT}$.  An invocation of the $W$ requires three queries.  From Theorem~\ref{thm:TrigPhaseFunctions} we can perform the singular value transformation with a number of applications of $W$ that is in $\mathcal{O}( \alpha t + \log(1/\epsilon)/\log\log(1/\epsilon))$.  The claimed result then follows by multiplication of this cost by the $\mathcal{O}(1)$ cost of implementing $W$.
\end{proof}
%This result is significant because the number of applications of the $W$ operation needed in our approach is $1/2$ \textcolor{red}{I think it's better than 1/2 no?} the value required using the conventional approach.  This is because we do not need to use oblivious amplitude amplification to glue together an odd and an even polynomial.  Instead we can simply build the polynomial directly saving us the overheads of amplitude amplification.

 {Next we use the fact that this polynomial can be constructed using Theorem~\ref{thm:PolynomialOfUnitaries} we see that a degree $\mathcal{O}(\log(\frac{1}{\epsilon}))$ polynomial can be implemented using $\mathcal{O}(\log(1/\epsilon))$ controlled unitary operations and as for our circuits there are $\mathcal{O}(1)$ two-qubit gate applied in the circuit of Theorem~\ref{thm:PolynomialOfUnitaries} per controlled unitary and thus the result holds.
 The above algorithm is known to be optimal as improvements upon this scaling would lead to an algorithm that could compute the parity function using fewer than $\mathcal{O}(N)$ queries to the underlying quantum bits~\cite{berry2015hamiltonian}. Importantly, however, our approach can be used to simplify the logic of Hamiltonian simulation by removing the need to combine polynomials of different parity.

Next, we will use $e^{it\sin H}$ and $e^{it\cos H}$ as our building blocks to build other phase functions. To this aim, we invoke one of the main technical results of [\cite{van2020quantum}, Lemma 37] about approximating smooth functions by Fourier series.

\begin{theorem}\label{thm:LowWeightFourierSeries}

\noindent  Let $\delta, \epsilon \in (0, 1)$ and $f : \mathbb{R} \mapsto \mathbb{C}$ s.t. $|f(x) - \sum_{k=0}^K a_k x^k|\leq \textrm{\Large$\frac{\epsilon}{4}$}$ for all $x\in [-1+\delta, 1-\delta]$. Then $\exists c \in \mathbb{C}^{2M+1}$ s.t:

\begin{equation}\label{eq:LowWeightFourierSeries}
    |f(x) - \sum_{m=-M}^M c_m e^{\frac{i\pi m}{2}x}|\leq \epsilon
\end{equation}
 
\normalsize
for all $x\in [-1+\delta, 1-\delta]$, where $M = \max(2\lceil \log(\frac{4\|a\|_1}{\epsilon})\frac{1}{\delta}\rceil, 0)$
\end{theorem}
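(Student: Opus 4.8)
The plan is to prove Theorem~\ref{thm:LowWeightFourierSeries} by reducing, via the triangle inequality, from the smooth function $f$ to the polynomial $p(x):=\sum_{k=0}^{K}a_kx^k$ that already approximates it, and then proving the genuinely new ingredient: a polynomial bounded on the closed unit disk can be approximated on the \emph{strict} sub-interval $[-1+\delta,1-\delta]$ by a short Fourier series in the modes $e^{i\pi mx/2}$, with a length governed by the bound, by $\delta$, and by the target accuracy, but \emph{not} by the degree. Degree-independence is exactly what the statement needs (no $K$ appears in $M$) and it is the crux. The period-$4$ frequency lattice $\{e^{i\pi mx/2}\}_m$ is chosen precisely so that $[-1,1]$ occupies half a period, leaving a margin of width $\Theta(\delta)$ near the endpoints $\pm1$.

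\emph{Reduction.} Since $|f(x)-p(x)|\le\tfrac{\epsilon}{4}$ on $[-1+\delta,1-\delta]$, it suffices to find $c\in\mathbb{C}^{2M+1}$ with $\big|p(x)-\sum_{m=-M}^{M}c_me^{i\pi mx/2}\big|\le\tfrac{3\epsilon}{4}$ on that interval; one further triangle inequality then gives the error $\epsilon$. The only quantitative handle on $p$ is $\|a\|_1$, and it is the right one: for $|z|\le1$ we have $|p(z)|\le\sum_{k}|a_k|\,|z|^k\le\|a\|_1=:A$. So the task reduces to: approximate a polynomial $p$ with $|p(z)|\le A$ on $|z|\le1$ on the interval $[-1+\delta,1-\delta]$ to accuracy $\eta:=\tfrac{3\epsilon}{4}$ by a Fourier series of length $O(\delta^{-1}\log(A/\eta))$. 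This is why $\|a\|_1$ — and, after the constant bookkeeping, the factor $4$ — enters $M=\max\!\big(2\lceil\log(4\|a\|_1/\epsilon)/\delta\rceil,0\big)$; the $\max(\cdot,0)$ absorbs the degenerate regime $A\lesssim\epsilon$ in which a single constant already suffices.

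\emph{Core step.} I would establish the reduced claim in one of two equivalent ways. (a) \emph{Substitution.} Pick an odd analytic change of variable $x=\psi(u)$ with $\psi(\pm1)=\pm1$ such that (i) on $[-1+\delta,1-\delta]$ the preimage obeys $|u|=|\psi^{-1}(x)|\le1-\Theta(\delta)$ (margin linear, not quadratic, in $\delta$ — a $\tan$-type map achieves this, unlike $\sin$), and (ii) $u=\psi^{-1}(x)$ is analytic on a fixed neighbourhood of $[-1,1]$, hence has an exponentially convergent expansion in the $e^{i\pi mx/2}$ modes. Expanding $p(x)$ as a power series in $u$, its tail beyond degree $N$ is a geometric-type series of ratio $1-\Theta(\delta)$, hence $\le\eta/2$ once $N=O(\delta^{-1}\log(A/\eta))$; replacing each $u^n$ ($n\le N$) by its Fourier expansion, truncated to exponentially small error, absorbs the remaining $\eta/2$. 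The total length is $O(\delta^{-1}\log(A/\eta))$, the degree entering only through $A$. (b) \emph{Periodic completion.} Alternatively, build a $4$-periodic function $\tilde p$ that is within $\eta/2$ of $p$ on $[-1+\delta,1-\delta]$, bounded by $O(A)$, and analytic in a complex strip of width $\Theta(\delta)$ (this width is forced: a wider strip leaves $\overline{\mathbb D}$ near $\pm1$, where the only bound on $p$ is degree-dependent). Its Fourier coefficients satisfy $|\widehat{\tilde p}(m)|\le CA\,e^{-c|m|\delta}$, so truncating at $|m|\le M=O(\delta^{-1}\log(A/\eta))$ costs at most another $\eta/2$; evaluated on $[-1+\delta,1-\delta]$ this gives the claimed approximation with no appearance of $\deg p$. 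I would use whichever realization makes the constant $2$ in $M$ cleanest (optimizing the geometric ratio, resp.\ the strip width) and then fix parameters to land on the stated bound.

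The main obstacle is precisely the degree-independence together with the sharp constant $2/\delta$ in the core step. In the substitution route, one needs a change of variable whose inverse genuinely expands in the $\pi/2$-lattice modes \emph{and} maps $[-1+\delta,1-\delta]$ only out to $|u|\le1-\Theta(\delta)$, while keeping the $u$-power-series coefficients of $p$ controlled purely by $\|a\|_1$; in the completion route, one must interpolate across the width-$\Theta(\delta)$ transition near $\pm1$ while preserving analyticity in a strip of the \emph{same} order $\Theta(\delta)$, since this is what makes the Fourier tail a true exponential in $m\delta$ and so yields the $\log(1/\epsilon)$ — rather than $\log^2(1/\epsilon)$ — dependence; this quantitative estimate is the one piece I would import from the Fourier-approximation literature. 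Everything else — the two triangle inequalities, the bound $|p|\le\|a\|_1$ on the disk, and the degenerate case — is routine.
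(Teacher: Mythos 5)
The paper does not actually prove this theorem: it is quoted verbatim as Lemma~37 of the cited reference \cite{van2020quantum}, so there is no in-paper proof for your attempt to be compared against. I will therefore assess your sketch on its own merits and against what that reference does.

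Your outer reduction (triangle inequality from $f$ to $p=\sum_k a_k x^k$, then the bound $|p(z)|\le \|a\|_1$ on $|z|\le 1$, then a degree-independent Fourier approximation of $p$) is the right frame, and your diagnosis of the crux — that $M$ must not see the degree $K$ — is correct. However, both of your proposed ``core step'' routes have gaps that are not merely bookkeeping. In route (a), the claim that analyticity of $\psi^{-1}$ on a neighbourhood of $[-1,1]$ ``hence'' gives an exponentially convergent expansion in the modes $e^{i\pi m x/2}$ does not follow: those are the Fourier modes of a $4$-periodic function, and a tan-type $\psi^{-1}$ is not $4$-periodic, nor can it be made so without acquiring a pole inside the period ($\tan(\pi x/4)$ has a pole at $x=2$); a non-periodic analytic function extended $4$-periodically has a jump at $x\equiv 2$ and hence only $O(1/m)$ Fourier decay. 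A second, independent gap in route (a): to control the Taylor coefficients of $p\circ\psi$ in $u$ by $\|a\|_1$ you need $\psi$ to map the unit disk into itself (Cauchy estimates), which is in direct tension with demanding a linear margin $|\psi^{-1}(x)|\le 1-\Theta(\delta)$; you do not exhibit a $\psi$ satisfying all three constraints simultaneously. Route (b) states but does not construct the required object, and the construction is nontrivial: an analytic, $4$-periodic function within $O(\eta)$ of $p$ on $[-1+\delta,1-\delta]$, uniformly $O(A)$, and analytic in a strip of width $\Theta(\delta)$ cannot be built from $p$ alone without first controlling $p$ on $[1,2]$ where it can be as large as $A\cdot 2^{K}$.

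What the cited proof actually does differently, and what your sketch is missing as its first move, is a \emph{degree truncation in $x$ itself}: since $|x|^k\le(1-\delta)^k\le e^{-k\delta}$ on $[-1+\delta,1-\delta]$, one can discard all monomials of degree $k> K_0:=\lceil\delta^{-1}\ln(4\|a\|_1/\epsilon)\rceil$ at cost $\le \epsilon/4$, so that $K$ may be replaced by $K_0=M/2$ outright — this single step is what makes $M$ degree-independent. The remaining low-degree monomials $x^k$, $k\le K_0$, are then each approximated by a weight-$\le 1$ Fourier series using the identity $x=\tfrac{2}{\pi}\arcsin(\sin(\pi x/2))$ together with a Chernoff-type concentration bound on the binomial coefficients arising from $\sin^n(\pi x/2)=(2i)^{-n}(e^{i\pi x/2}-e^{-i\pi x/2})^n$. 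Your route (a) gestures at a degree truncation, but it is attached to the change-of-variable power series in $u$ rather than to $x$, where the elementary bound $|x|^k\le(1-\delta)^k$ already does the job; untangling it from the problematic substitution would get you onto the right track.
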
A notable property of the prior result is that the bounds on the Fourier
series do not depend on the degrees of the polynomials terms. This can however be expected
since the terms that have large degree make negligible contributions due to the restricted domain $x \in [-1 + \delta, 1 - \delta]$, and therefore we can drop them without loss of accuracy.
{Note that these results have been traditionally used in conjunction with LCU methods to implement arbitrary functions of a generator~\cite{QSVT}.  However, here the main difference between the LCU results and ours is that our approach does not require scaling by the inverse 1-norm of the coefficients $\Vec{c}$ or the additional poly-logarithmic memory. The 1-norm improvement of our algorithm is because LCU is a more general algorithm that builds a linear combination of arbitrary unitaries and not just the powers of the same unitary, because of this LCU puts more restrictions on the probability of success.}
{As a straightforward example of this approach, let us consider the case of fractional queries to a unitary matrix.}

%\textcolor{red}{Another cool one would be to build $e^{i|H|} = e^{iarccos(cos(H))}$ which can be done in $\mathcal{O}(\log(1/\epsilon))$}

\subsection{Application to Fractional Queries}
{Fractional queries are a generalization of the following problem listed by Scott Aaronson as one of “The ten most annoying questions in quantum computing”: given a unitary $U$, can we implement $\sqrt{U}= e^{i\frac{H}{2}}$?~\cite{Scott_aaronson} Or more generally $U^t = e^{itH}$ for $t\in (0,1)$? Sheridan et al. \cite{Sheridan_2009} first gave an algorithm to implement the fractional power of a unitary matrix that runs in $\mathcal{O}(max\left({\large\frac{1}{\delta}}, {\large\frac{1}{\epsilon}}\right)\log({\large\frac{1}{\epsilon})})$ which was then improved upon exponentially in terms of the error dependence by \cite{QSVT} to run in $\mathcal{O}({\large\frac{1}{\delta}}\log({\large\frac{1}{\epsilon})})$.  Sheridan et al. \cite{Sheridan_2009} also proved a lower bound on this problem, which shows that the $\delta$ dependence of this algorithm is actually optimal. We provide an improvement of this algorithm below which leads to an additive logarithmic factor in $1/\epsilon$ rather than a multiplicative logarithmic factor. Hence, our algorithm here not only improves upon previous techniques, but it is in fact provably optimal up to double logarithmic factors.}

\begin{theorem}\label{thm:FractionalQueries}
\noindent  Let $U = e^{iH}$ be a unitary operator, let $t,\epsilon \in (0,1)$, and assume $\sigma_{\min}(H) \geq \delta$ where $\sigma_{\min}(H)$ is the minimum singular value of $H$. We can then implement an $\epsilon$-approximation of $U^t = e^{itH}$ with $\mathcal{O}\left(\frac{1}{\delta} + \log(\large\frac{1}{\epsilon})\right)$ uses of controlled-$U$ and $2$-qubit gates using $\mathcal{O}(\log(1/\delta))$ ancilla qubits.  Further, no algorithm is possible that solves the algorithm using $o(\frac{1}{\delta})$. 
\end{theorem}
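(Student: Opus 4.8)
The plan is to resolve the branch ambiguity of the fractional power with a phase‑estimation register and then apply the residual, now‑unambiguous, rotation with a short GQSP circuit; the lower bound is inherited from prior work.

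\textbf{Reduction.} Diagonalising $H=\sum_j\lambda_j\ket{\lambda_j}\!\bra{\lambda_j}$ with $|\lambda_j|\ge\delta$ and (fixing a convention) $\|H\|$ small enough that $\mathrm{spec}(H)$ lies in a fixed interval of length $\le 2\pi$, the target acts as $\ket{\lambda_j}\mapsto e^{it\lambda_j}\ket{\lambda_j}$. The obstruction to realising this as a polynomial in $U$ is that $z\mapsto z^t$ has a branch point on $\mathbb T$ at the ``seam'' of the convention, and although $\sigma_{\min}(H)\ge\delta$ keeps every eigenphase of $U$ at angular distance $\ge\delta$ from that seam, a trigonometric polynomial that tracks $z^t$ across the jump still costs degree $\Theta((1/\delta)\log(1/\epsilon))$ — the multiplicative scaling of~\cite{QSVT}, which is the baseline to beat.

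\textbf{Branch resolution plus local correction.} To obtain the additive bound I would run (coherent) phase estimation of $U$ to resolution $\Theta(\delta)$ into an ancilla register of $\mathcal O(\log(1/\delta))$ qubits, so that on $\ket{\lambda_j}$ the register holds estimates $\tilde\lambda$ with $|\tilde\lambda-\lambda_j|=\mathcal O(\delta)$ (with high weight). Conditioned on the register I then (i) apply the \emph{exact} phase $e^{it\tilde\lambda}$, a diagonal unitary computed classically from the register, and (ii) apply $P_D(\tilde U)$ for $\tilde U:=e^{-i\tilde\lambda}U$, whose eigenphases $\lambda_j-\tilde\lambda$ are confined to a tiny arc of width $\mathcal O(\delta)$ about $1$; there the principal power $(1+w)^t=\sum_{k\ge0}\binom tk w^k$ is analytic at $w=0$, hence $\epsilon$‑approximated on that arc by its truncation $P_D$ to degree $D=\mathcal O(\log(1/\epsilon)/\log(1/\delta))$, which (after a harmless rescaling to norm $\le1$ on all of $\mathbb T$) is implemented by Corollary~\ref{cor:PossiblePs} using $\mathcal O(D)$ controlled‑$U$ calls. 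The telescoping $e^{it\tilde\lambda}\,e^{it(\lambda_j-\tilde\lambda)}=e^{it\lambda_j}$ then yields the target on the good part of the register, after which the register is uncomputed. Query count: $\mathcal O(1/\delta)$ for estimation, $\mathcal O(\log(1/\epsilon))$ for the correction, $\mathcal O(1/\delta)$ for uncomputation; ancillas $\mathcal O(\log(1/\delta))$; and the $\mathcal O(1)$ two‑qubit gates per controlled‑$U$ in Fig.~\ref{fig:GQSP_circuit} keep the two‑qubit count of the same order.

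\textbf{Main obstacle.} The crux is the estimate's behaviour near the seam: an eigenphase only $\delta$ from it is a mere $\mathcal O(1)$ grid units away, so a plain $\mathcal O(1/\delta)$‑query estimation leaves a constant probability that $\tilde\lambda$ has wrapped to the wrong side, corrupting the output by the $\Theta(1)$ phase $e^{\pm2\pi it}$. The hard part is therefore to pin down the correct side of the seam with failure probability $\mathcal O(\epsilon^2)$ while spending only $\mathcal O(1/\delta+\log(1/\epsilon))$ — rather than the naive $\mathcal O((1/\delta)\log(1/\epsilon))$ — total queries; I expect to achieve this by combining a single $\mathcal O(1/\delta)$‑query precise estimate with an $\mathcal O(\log(1/\epsilon))$‑query boosted determination of which side of the seam $\lambda_j$ lies on, after which the precise estimate never wraps, and then bounding the residual error directly. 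Trading the correction's $\log(1/\epsilon)$ for $\log(1/\epsilon)/\log\log(1/\epsilon)$ via the Jacobi--Anger expansion of Theorem~\ref{thm:TrigPhaseFunctions} recovers ``optimal up to double‑logarithmic factors.''

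\textbf{Lower bound.} The converse $\Omega(1/\delta)$ is that of Sheridan, Maslov and Mosca~\cite{Sheridan_2009}: an $\epsilon$‑approximate fractional‑query algorithm distinguishes eigenphases on opposite sides of, and within $\mathcal O(\delta)$ of, the seam (where $e^{it\lambda}$ differs by $\Theta(1)$), a phase‑discrimination task requiring $\Omega(1/\delta)$ controlled‑$U$ uses; hence no algorithm using $o(1/\delta)$ queries solves the problem.
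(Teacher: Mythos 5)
Your high-level plan---resolve the branch ambiguity with a $\delta$-precise phase-estimation register, then apply a short GQSP correction, and inherit the $\Omega(1/\delta)$ lower bound from~\cite{Sheridan_2009}---matches the paper's architecture, and you correctly locate the seam behaviour as the nontrivial point. But your correction step is a genuinely different construction from the paper's, and that is exactly where the proposal breaks. The paper never forms $\tilde U=e^{-i\tilde\lambda}U$: it instead writes $e^{itx}$ piecewise on five \emph{overlapping} arcs using $\arcsin(\sin x)$ and $\arccos(\cos x)$ so that the argument fed to Theorem~\ref{thm:LowWeightFourierSeries} stays in $[-1/\sqrt 2,1/\sqrt 2]$ (a constant-width window, hence a $\delta$-independent $M=\mathcal O(\log(1/\epsilon))$), implements each Fourier term by Jacobi--Anger, and conditions on only three bits of the estimate. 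Because adjacent arc functions agree on overlaps of constant width, the classifier needs only $\Theta(1)$ accuracy everywhere except at the wrap-around at $0$, which $\sigma_{\min}(H)\ge\delta$ handles.

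Two steps in your version do not go through as written. First, the ``harmless rescaling to norm $\le1$ on all of $\mathbb T$'' of the truncated binomial series is not harmless: $P_D(z)=\sum_{k\le D}\binom tk(z-1)^k$ is of size $\Theta(2^D)$ near $z=-1$, so the rescaling needed to invoke Corollary~\ref{cor:PossiblePs} shrinks the amplitude on the target branch to $\sim2^{-D}$ and the resulting block-encoding is useless. You need a degree-$D$ polynomial that is simultaneously $\epsilon$-close to $z^t$ on the $\delta$-arc \emph{and} uniformly $\mathcal O(1)$ on $\mathbb T$; the paper's Fourier-series-of-exponentials construction is bounded term by term precisely to meet that second requirement, and once you impose it the claimed degree $\mathcal O(\log(1/\epsilon)/\log(1/\delta))$ is no longer justified. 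Second, the ``$\mathcal O(\log(1/\epsilon))$-query boosted determination of which side of the seam $\lambda_j$ lies on'' is unsupported: discriminating phases separated by $\delta$ costs $\Theta(1/\delta)$ queries per independent trial, so a majority vote over $\mathcal O(\log(1/\epsilon))$ trials returns you to the multiplicative $\mathcal O((1/\delta)\log(1/\epsilon))$ scaling you set out to beat. Your design, by demanding $\tilde\lambda$ accurate to $\Theta(\delta)$ everywhere so that $\tilde U$'s spectrum sits in a $\delta$-arc, makes the estimator's job strictly harder than the paper's coarse eight-way classification does, and the proposal does not supply the mechanism that would make the additive bound come out.
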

\begin{proof}
First notice that for $x\in [0, 2\pi]$:
\begin{equation}
    e^{itx} = 
    \begin{cases}
        e^{it\arccos\left(\cos\left(x\right)\right)}, & 0\leq x\leq\pi\\
        e^{it(2\pi-\arccos\left(\cos\left(x\right)\right))}, & \pi < x < 2\pi\\
    \end{cases}
\end{equation}
We hope to use ~\ref{thm:LowWeightFourierSeries} to build $e^{it\arccos(y)}$, with $y = \cos(x)$ and make the appropriate phase shifts in the second half of the domain to build $e^{itx}$. However, notice that $y = \cos(x) \in [-1, 1]$ which blows up the number of terms in ~\ref{thm:LowWeightFourierSeries}. So the idea here is to write a more fine-grained piecewise representation of $e^{itx}$ such that $y\in [-1/\sqrt{2}, 1/\sqrt{2}]$, hence getting rid of the $\delta$ dependence in ~\ref{thm:LowWeightFourierSeries}, so the approximation part of the algorithm runs in $\mathcal{O}(\log(1/\epsilon))$. We would still need to distinguish the states at the branch-cut which depends on $\delta$, however, this is a separate part of the algorithm and that is why we end up with a $\mathcal{O}\left(\frac{1}{\delta} + \log(\large\frac{1}{\epsilon})\right)$ runtime instead of the previous best algorithm which achieves a $\mathcal{O}\left(\frac{1}{\delta}\log(\large\frac{1}{\epsilon})\right)$. We can write this more fine-grained piecewise representation of $e^{itx}$ as:\\

\begin{equation}\label{eq:PiecewiseFractional}
e^{itx} = 
    \begin{cases}
        e^{it\arcsin\left(\sin\left(x\right)\right)}, & 0\le x\le\frac{\pi}{4}\\
        e^{it\arccos(\cos(x))}, & \frac{\pi}{4}\le x\le\frac{3\pi}{4}\\
        e^{it(\pi-\arcsin\left(\sin\left(x\right)\right))}, & \frac{3\pi}{4}\le x\le\frac{5\pi}{4}\\
        e^{it(2\pi -\arccos(\cos(x)))}, &\frac{5\pi}{4}\le x\le\frac{7\pi}{4}\\
        e^{it(2\pi+\arcsin\left(\sin\left(x\right)\right))}, &\frac{7\pi}{4}\le x < 2\pi
    \end{cases}
\end{equation}

In order to implement an $\epsilon$-approximation of $U^t = e^{itH}$, we first perform a $\delta$-precise phase estimation, and use the first 3 qubits storing the phase estimation results to figure out which $8^{th}$ of the unit circle we are in, and apply the corresponding transformation based on ~\eqref{eq:PiecewiseFractional}. Since by assumption the smallest eigenvalue of $H$ in terms of norm is $\geq \delta$, no eigenvector will be miss categorized at the branch cut. Furthermore, it doesn't matter if we miss categorize the eigenvalues at the other boundary points since the piecewise function in ~\eqref{eq:PiecewiseFractional} is continuous at all other boundaries, and thus applying the neighboring piece will still result in a correct transformation. Then since a $\delta$-precise phase estimation has query complexity $\mathcal{O}\left({\large\frac{1}{\delta}}\right)$, it remains to perform an $\epsilon$-approximations of $e^{it\arcsin(\sin(x))}$ and $e^{it\arccos(\cos(x))}$ using $\mathcal{O}\left( \log({\large\frac{1}{\epsilon})}\right)$ controlled-$U$ and $2$-qubit gates.\\

First let us look at the Taylor series of $e^{it \arcsin(x)}$. One
can see that the 1-norm of the coefficients of the Taylor series of $t \arcsin(x)$ is $|t| \arcsin(1) = |t|\frac{\pi}{2}$. Therefore, for $t\in[\frac{-2}{\pi}, \frac{2}{\pi}]$
 we get that the 1-norm of the Taylor series of $e^{it \arcsin(x)}$ is $\leq e^1 = e$. Also notice that $\|\sin(x)\|$ and $\|\cos(x)\|$ are both $\leq \frac{1}{\sqrt{2}}$ on the domains where they're being used in ~\eqref{eq:PiecewiseFractional}, hence eliminating the dependence on $\delta$ for theorem \ref{thm:LowWeightFourierSeries}. Therefore, for $t\in[\frac{-2}{\pi}, \frac{2}{\pi}]$, we can write:

 \begin{equation}
    \left|e^{it\arcsin\left(\sin\left(x\right)\right)} - \sum_{m=-M}^M c_m e^{\frac{i\pi m}{2}\sin\left(x\right)}\right|\leq \epsilon
\end{equation}
For $M\in\mathcal{O}\left( \log({\large\frac{1}{\epsilon})}\right)$. Hence it suffices to $\frac{\epsilon}{M}$-approximate each term $e^{\frac{i\pi m}{2}\sin\left(x\right)}$, which by ~\ref{thm:TrigPhaseFunctions} requires a polynomial of degree at most $\mathcal{O}(M + \log({\large\frac{M}{\epsilon}})) \sim \mathcal{O}(\log({\large\frac{1}{\epsilon}}) + \log({\large\frac{\log({1/\epsilon)}}{\epsilon}})) \subseteq \mathcal{O}(\log({\large\frac{1}{\epsilon}}))$. Notice that the sum of all the polynomials can be written as a single polynomial, hence at the end we only require a single polynomial of degree $\mathcal{O}(\log({\large\frac{1}{\epsilon}}))$. Furthermore, $\arccos(x) = \frac{\pi}{2} - \arcsin(x)$, thus building $e^{it\arccos(\cos(x))}$ will have the same runtime as $e^{it\arcsin(\sin(x))}$. Then since For $t\in [-1,1]$, $U^t$ can then be implemented by applying $U^{\frac{t}{2}}$ twice, $U^t = e^{itH}$ can be implemented with complexity $\mathcal{O}\left(\frac{1}{\delta} + \log(\large\frac{1}{\epsilon})\right)$ for all $t\in[-1,1]$.

To show optimality, it is known from \cite{Sheridan_2009} that $o(1/\delta)$ scaling is impossible for fractional queries.  This shows that the only remaining question is whether the $\epsilon$ scaling is optimal.  
\end{proof}

\subsection{Application to Square-Roots / Bosonic Simulation}
Next let us consider building a series expansion for the square root Phase Function of a unitary operator.  Without loss of generality, every unitary matrix can be expressed as $e^{iH}$ for Hermitian $H$ and our aim is to examine as an example a method for constructing $e^{it\sqrt{H}}$.  The reasons why we would want to do this are many, but the simplest example involves implementing the exponential of a bosonic operator of the form $e^{i (a^\dagger + a) t}$ where $a = \sum_{j\ge 0} \sqrt{j}\ket{j+1}\!\bra{j}$.  Conventionally this operator is implemented using an arithmetic circuit~\cite{brown2010using,somma2015quantum,shaw2020quantum}, which although polylogarithmic in depth~\cite{soeken2017hierarchical}, involves substantial constant factors and requires many ancillary qubits to carry out the reversible circuitry needed for the logic.

\begin{theorem}\label{thm:SquareRootPhaseFunction}
\noindent Let $U = e^{iH}$ be a finite-dimensional unitary operator, let $\delta,\epsilon \in (0,1)$, $t \in \mathbb{R}$, and assume $\sigma_{\min}(H) \geq \delta$ where $\sigma_{\min}(H)$ is the minimum singular value of $H$. Then we can implement an $\epsilon$-approximation of $e^{it\sqrt{H}}$ using $O\left(\textbf{\large$\frac{1}{\delta}$}\left(\log(\textbf{\large$\frac{1}{\epsilon}$}) + \vert t\vert \right)\right)$ uses of controlled-$U$ and $2$-qubit gates using $\mathcal{O}(\log{1/\delta})$ ancilla qubits.
\end{theorem}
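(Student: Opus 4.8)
The plan is to run the same machinery as in the proof of Theorem~\ref{thm:FractionalQueries}: phase--estimate in order to route each eigenvector of $H$ to the correct branch of a piecewise description of $x\mapsto e^{it\sqrt x}$, and on each branch reduce to implementing $e^{i\gamma\sin H}$ (or $e^{i\gamma\cos H}$) via Theorems~\ref{thm:LowWeightFourierSeries} and~\ref{thm:TrigPhaseFunctions}. Writing the eigenphases of $U=e^{iH}$ on the principal branch so that $H\succeq 0$ (hence, by hypothesis, $\mathrm{spec}(H)\subseteq[\delta,2\pi)$), I would first perform a $\delta$--precise phase estimation, costing $\mathcal{O}(1/\delta)$ controlled-$U$ and $\mathcal{O}(\log(1/\delta))$ ancilla qubits, and use the leading estimate bits to classify the eigenphase $x$ into one of a constant number of arcs of the circle on each of which $x$ is an affine function of $\arcsin(\sin x)$ or $\arccos(\cos x)$, exactly as in~\eqref{eq:PiecewiseFractional}. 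As in the fractional--query argument, continuity of $e^{it\sqrt x}$ makes a misclassification at every arc boundary harmless except at $x=0\equiv 2\pi$, where $\sqrt{\,\cdot\,}$ is non--smooth; the hypothesis $\sigma_{\min}(H)\ge\delta$ guarantees that no eigenphase lies in $(0,\delta)$, so the $\delta$--precise estimate cleanly separates ``$x$ near $0^+$'' from ``$x$ near $2\pi^-$''.

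On the arc abutting $x=0$, where $x=\arcsin(\sin x)$, I would set $y=\sin x$, reparametrize $y=\alpha z+\beta$ by a ($\delta$--dependent) affine map, and apply Theorem~\ref{thm:LowWeightFourierSeries} to the outer function $f(z)=e^{it\sqrt{\arcsin(\alpha z+\beta)}}$ on $[-1+\delta_0,1-\delta_0]$; the other, constantly many, arcs are handled the same way with $y=\cos x$ or $y=\sin x$ as convenient and an affine adjustment. Since $\arcsin$ is analytic away from $\pm1$ and $\sqrt w$ about a positive $w_0$ has a square--root--type singularity whose Taylor coefficients decay like $k^{-3/2}$, the Taylor $1$--norm of $\sqrt{\arcsin(\alpha z+\beta)}$ is $\mathcal{O}(1)$ uniformly in $\delta$, and the submultiplicativity bound $\|\mathrm{Taylor}(e^{iw})\|_1\le e^{\|\mathrm{Taylor}(w)\|_1}$ gives $\|a\|_1\le e^{\mathcal{O}(|t|)}$. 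Theorem~\ref{thm:LowWeightFourierSeries} then yields $f(z)=\sum_{m=-M}^M c_m e^{i\pi m z/2}+\mathcal{O}(\epsilon)$ with $M=\mathcal{O}\big((|t|+\log(1/\epsilon))/\delta_0\big)$; writing $e^{i\pi m z/2}=e^{i\pi m\beta/2}\,e^{i(\pi m\alpha/2)\sin x}$ turns each term into a constant phase times an operator implementable by Theorem~\ref{thm:TrigPhaseFunctions} using a degree-$\mathcal{O}(M)$ polynomial, and summing the terms gives a single degree-$\mathcal{O}(M)$ polynomial of $U$. Renormalizing by $1/(1+\epsilon)$ to meet the $|P|\le 1$ hypothesis of Corollary~\ref{cor:PossiblePs} perturbs the output by $\mathcal{O}(\epsilon)$, so the arc's transformation is realized with $\mathcal{O}(M)$ controlled-$U$ and $\mathcal{O}(M)$ two--qubit gates.

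The step I expect to be the main obstacle is pinning down $\delta_0$ on the arc abutting $x=0$. There the square--root branch point sits at $y=0$, i.e.\ at the edge of the relevant window $y=\sin x\in[\sin\delta,1/\sqrt2]$, so the affine map $y=\alpha z+\beta$ must push $y=0$ out past $z=-1$ (to keep $f$ analytic on $[-1+\delta_0,1-\delta_0]$), which forces it to stretch the sliver $y\in[0,\sin\delta]$ of width $\Theta(\delta)$ across a $z$--width $\gtrsim\delta_0$; the same stretch sends the $\Theta(1)$--wide physical window to a $z$--window of width $\gtrsim\delta_0/\delta$, which must still fit inside $[-1,1]$, forcing $\delta_0=\Theta(\delta)$. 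Choosing $\sin x$ rather than $\cos x$ here matters, since $\sqrt{\arccos y}$ has its branch point at $y=1$, at distance only $\Theta(\delta^2)$ from $\cos x\in[\cos(\pi/4),\cos\delta]$, which would cost $1/\delta^2$ instead. This is precisely the factor of $1/\delta$ absent in Theorem~\ref{thm:FractionalQueries}, where $\sin x,\cos x$ stayed in $[-1/\sqrt2,1/\sqrt2]$ and $\delta_0$ could be a constant. Hence the near--zero arc contributes degree $\mathcal{O}\big((|t|+\log(1/\epsilon))/\delta\big)$ while all other arcs contribute only $\mathcal{O}(|t|+\log(1/\epsilon))$; adding the $\mathcal{O}(1/\delta)$ phase--estimation cost and recalling that each controlled-$U$ in the circuits of Theorems~\ref{thm:PolynomialOfUnitaries} and~\ref{thm:TrigPhaseFunctions} carries $\mathcal{O}(1)$ two--qubit gates, the total is $\mathcal{O}\big(\tfrac1\delta(\log(1/\epsilon)+|t|)\big)$ controlled-$U$ and two--qubit gates with $\mathcal{O}(\log(1/\delta))$ ancilla qubits, as claimed.
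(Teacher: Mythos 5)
Your approach is correct but takes a genuinely different route from the paper's. The paper does \emph{not} reuse the piecewise-arc machinery of Theorem~\ref{thm:FractionalQueries} for the outer square-root function: it instead performs a single global substitution $y = x/(2\pi) - 1$ (so $y \in [-1 + \delta/(2\pi), 1 - \delta/(2\pi)]$) and applies Theorem~\ref{thm:LowWeightFourierSeries} once to $f(y) = e^{i\sqrt{2\pi}t\sqrt{y+1}}$. This produces Fourier terms $e^{i\pi m y/2} = e^{imx/4}$, i.e.\ powers of $U^{1/4}$, and the fractional power $U^{1/4}$ is then built via the fractional-query subroutine (which is where phase estimation enters the paper's count). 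The $1/\delta$ factor there is exactly the theorem's own $\delta$ parameter, which becomes $\delta/(2\pi)$, reflecting that the branch point of $\sqrt{y+1}$ at $y=-1$ sits $\Theta(\delta)$ from the edge of the domain. Your version instead phase-estimates to route into arcs (a la~\eqref{eq:PiecewiseFractional}), uses $\sin x$ or $\cos x$ as the local variable with an affine reparametrization, builds the Fourier terms $e^{i\gamma\sin H}$ via the Jacobi--Anger machinery of Theorem~\ref{thm:TrigPhaseFunctions} rather than via fractional powers, and recovers the same $1/\delta$ from the forced $\delta_0 = \Theta(\delta)$ on the near-origin arc. These are two parametrizations of the same underlying obstruction --- the $\sqrt{\cdot}$ branch point abutting the usable window --- and both correctly give $M = \mathcal{O}((|t| + \log(1/\epsilon))/\delta)$. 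Your trade: you avoid fractional queries at the cost of a piecewise construction; the paper avoids the piecewise construction for the outer function at the cost of needing $U^{1/4}$.

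One place where your argument is less rigorous than the paper's is the Taylor $1$-norm bound. You invoke the submultiplicativity bound $\|\mathrm{Taylor}(e^{iw})\|_1 \le e^{\|\mathrm{Taylor}(w)\|_1}$ together with an asserted $\mathcal{O}(1)$ bound on $\|\mathrm{Taylor}(\sqrt{\arcsin(\alpha z + \beta)})\|_1$. The assertion is true --- near $z_0 = -\beta/\alpha$ (which your affine map keeps at distance $\Theta(\delta)$ beyond $z=-1$), the function behaves like $\sqrt{\alpha(z-z_0)}$, whose binomial coefficients decay like $|z_0|^{-k} k^{-3/2}$, so $\sum_k |z_0|^{-k} k^{-3/2} \le \zeta(3/2)$ uniformly in $\delta$ --- but you should spell this out rather than gesture at ``square-root-type singularity.'' The paper sidesteps this by applying Hardy's inequality directly to the full function $e^{it\sqrt{x+1}}$ and its derivative, which yields the explicit bound $\|a\|_1 \le 1 + 2|t|e^{|t|/2}$ and hence $\log\|a\|_1 = \mathcal{O}(|t|)$ without needing to separately control the $1$-norm of the inner square root. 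If you tighten up that one claim, your proof stands as a valid, modular alternative.
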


\begin{proof}
In order to build $e^{it\sqrt{H}}$, we will build $e^{i\sqrt{2\pi}t\sqrt{y+1}}$ using Theorem~\ref{thm:LowWeightFourierSeries} and sub-in $y = \frac{x}{2\pi} - 1$ to ensure $y\in [-1+\mathcal{O}(\delta), 1-\mathcal{O}(\delta)]$. Subbing this into ~\eqref{eq:LowWeightFourierSeries} we get:
\begin{equation}
    |e^{it\sqrt{x}} - \sum_{m=-M}^M c_m e^{\frac{i\pi m}{2}(\frac{x}{2\pi} - 1)}|\leq \epsilon
\end{equation}
This is equivalent to the following
\begin{equation}
    |e^{it\sqrt{x}} - \sum_{m=-M}^M c_m e^{\frac{ixm}{4}}e^{\frac{-i\pi m}{2}}|\leq \epsilon
\end{equation}
Absorbing the constant into the coefficient and re-writing the sum gives us:
\begin{equation}
    \sum_{m=-M}^M c_m' e^{\frac{ixm}{4}} = \sum_{j=0}^3 e^{\frac{ixj}{4}}\sum_{k=-M/4}^{M/4} (c_{4k+j}')e^{ikx}
\end{equation}

The result of Corollary~\ref{cor:sim} shows that an $\frac{\epsilon}{M}$-approximation of $e^{\frac{iHj}{4}}$ can be built using a polynomial of degree $\log(\textbf{\large$\frac{M}{\epsilon}$})$ of $U$ and an initial $\delta$-precise phase estimation. Thus, we can build the above approximation using a polynomial of degree 
$\mathcal{O}(\log({\large\frac{M}{\epsilon}}) + M)$ and an initial $\delta$-precise phase estimation. Since \begin{equation}M \in \mathcal{O}\left({\large\frac{1}{\delta}}\left(\log({\large\frac{1}{\epsilon}}) + \log(\|a\|_1)\right)\right),\label{eq:Mbd}\end{equation} We can implement $e^{it\sqrt{H}}$ using $\mathcal{O}(M)$ uses of controlled-$U$ and $2$-qubit gates and $\mathcal{O}(\log{1/\delta})$ ancilla qubits. So it remains to show $\log(\|a\|_1) \in \mathcal{O}(|t|)$.\\

Let $f(x) = e^{it\sqrt{x+1}}=\sum_{n=0}^\infty a_n x^n$, and $g(x) = f'(x) = \frac{it}{2\sqrt {x+1}}e^{it\sqrt{x+1}} =\sum_{n=0}^\infty b_n x^n$. So we have that $a_{n+1}=\frac{b_n}{n+1}$. Next, we'll use the Hardy inequality from~\cite{duren1970theory} which states that if $g(z) = \sum_{n=0}^\infty b_n z^n \in H^1$, then 
\begin{equation}
\sum_{n=0}^\infty \frac{|b_n|}{n+1} \le \frac{1}{2}\sup_{0<r<1}\int_0^{2\pi}|g(re^{ix})|\,dx
\end{equation}
So we get that
\begin{equation}\sum_{n = 0}^{\infty} |a_n| = \vert e^{it}\vert + \sum_{n=0}^\infty \frac{|b_n|}{n+1} \leq 1+ \frac{1}{2}\int_0^{2\pi}|g(e^{ix})| \,dx\end{equation}
Notice:
\begin{equation}\frac{1}{2}\int_0^{2\pi}|g(e^{ix})|\,dx = \frac{1}{2}\int_0^{2\pi}|\frac{it}{2\sqrt {e^{ix}+1}}e^{it\sqrt{e^{ix}+1}}|\,dx\end{equation}
Since $1+e^{i\theta}=2\cos \left(\frac{\theta}{2}\right)e^{i\theta/2}$, then $ \sqrt {1+e^{i\theta}}=\sqrt {2\cos \left(\frac{\theta}{2}\right)} e^{i\theta/4}$. We have that for $\theta \le \pi$:

\begin{equation}\Re (it \sqrt {2\cos \left(\frac{\theta}{2}\right)} e^{i\theta/4}) =-t\sin \left(\theta/4\right)\sqrt {2\cos \left(\frac{\theta}{2}\right)}\end{equation}
and similarly for $\theta \ge \pi$
\begin{equation}
    \Re (it \sqrt {2\cos \left(\frac{\theta}{2}\right)} e^{i\theta/4}) =-t\cos\left(\theta/4\right)\sqrt {2\cos \left(\frac{\theta}{2}-\pi\right)}
\end{equation}

Thus:
\begin{equation}\frac{1}{2}\int_0^{\pi}|g(e^{ix})|\,dx \leq \frac{|t|}{4} \int_0^{\pi}\frac{e^{-t\sin \left(\theta/4\right)\sqrt {2\cos \left(\frac{\theta}{2}\right)}}}{\sqrt {2\cos \frac{\theta}{2}}}d\theta \end{equation}
$e^{-t\sin \left(\theta/4\right)\sqrt {2\cos \left(\frac{\theta}{2}\right)}}$ has a maximum of $e^{\frac{\vert t\vert}{2}}$. Therefore:

\begin{equation}\frac{1}{2}\int_0^{\pi}|g(e^{ix})|\,dx \leq \frac{|t|e^{\frac{\vert t\vert}{2}}}{4} \int_0^{\pi}\frac{d\theta}{\sqrt {2\cos \frac{\theta}{2}}} \end{equation}
The integral is a constant ($\sim 3.7$), thus:
\begin{equation}\frac{1}{2}\int_0^{\pi}|g(e^{ix})|\,dx \leq \vert t\vert e^{\frac{\vert t\vert}{2}}\end{equation}
Repeating the same argument for $\theta \ge \pi$  we see that
\begin{equation}
\frac{1}{2}\int_\pi^{2\pi}|g(e^{ix})|\,dx \leq \frac{|t|}{4} \int_\pi^{2\pi}\frac{e^{-t\sin \left(\theta/4\right)\sqrt {2\cos \left(\frac{\theta}{2}-\pi\right)}}}{\sqrt {2\cos \frac{\theta}{2}}}d\theta \le |t|e^{\frac{|t|}{2}}.
\end{equation}
\begin{equation}\sum_{n = 0}^{\infty} |a_n| \leq 1 + 2\vert t\vert e^{\frac{\vert t\vert}{2}}\end{equation}
Then from the monotonicity of the logarithm function,
\begin{equation}
    \log(\sum_{n} |a_n|) \le \log(1+2|t|e^{|t|/2}) \in \mathcal{O}(|t|).\label{eq:logbd}
\end{equation}
Our result then follows by combining ~\eqref{eq:logbd} and~\eqref{eq:Mbd}
\end{proof}
In this section, we explored applications of our Generalized Quantum Signal Processing framework within the context of Phase Functions to implement unitary operators. We gave a conceptually simplified formulation of the Hamiltonian simulation algorithm within the qubitization formalism with a factor 2 reduction in the number of queries to the walk operator in cases where independent queries to $U$ and $U^\dagger$ are needed (although in many simulation examples both can be implemented using a single query~\cite{babbush2018encoding}). We further proposed an enhanced and provably optimal algorithm for implementing fractional queries of unitary operators. Lastly, we gave a novel approach for implementing square root Phase Functions of unitary matrices with applications in the implementation of bosonic operators. In the next section, we apply our method to implementing non-unitary operations by extending the concept of Fourier Decomposition to normal matrices.

\section{\label{sec:FourierBasisForMatrices} Normal Matrix Factories}
In this section, we extend the concept of Fourier decomposition to normal matrices by utilizing polynomials of a special unitary matrix. Leveraging this insight, we demonstrate that any normal matrix can be written as a polynomial of the root of unity unitary in its eigenbasis. This result will then be used as a basis for synthesizing normal matrices in various contexts. We will begin by discussing the synthesis of diagonal matrices, demonstrating that this can be accomplished with a relatively low number of single and two-qubit gates. We then further apply this method to create convolution matrices and discuss the application of our approach to implementing convolution operators and solving systems of equations involving convolution.

Given $N\in \mathbb{N}$, we define $\omega_N = e^{\frac{i2\pi}{N}}$ to be a root of unity. This mathematical construct provides a key building block in defining our main subject, the root of unity matrix $U_{\omega_\lambda}$, in relation to a given basis set. Let us consider an orthonormal basis $\lbrace \vert \lambda_j\rangle \rbrace_{j=0}^{N-1}$. With respect to this basis, we define the root of unity matrix as:

\begin{equation}\label{root_of_unity_matrix}
U_{\omega_\lambda} = \sum_{j=0}^{N-1}\omega_N^j\vert \lambda_j\rangle\langle \lambda_j\vert.
\end{equation}

Note that $U_{\omega_\lambda}$ is a unitary and diagonal matrix in the basis $\lbrace \vert \lambda_j\rangle \rbrace_{j=0}^{N-1}$. In an analogy with the complex exponential function $e^{i\theta}$, we will investigate the set of operators $\lbrace U_{\omega_\lambda}^n \rbrace_{n=0}^{N-1}$. Our goal is to demonstrate that this set constitutes a complete orthonormal operator basis for the space of all normal operators in the basis $\lbrace \vert \lambda_j\rangle \rbrace_{j=0}^{N-1}$. This fact can easily be seen to follow from the properties of the discrete Fourier transform, but a proof is given below for completeness.
\begin{lemma}\label{lem:FourierDecompositionOfMatrices}
 Given an $N \times N$ normal matrix $\mathcal{A} = \sum_{j=0}^{N-1} \lambda_j\vert \lambda_j\rangle\langle \lambda_j\vert$ for $\lambda_j \in \mathbb{C}$, $\mathcal{A}$ can be written as:\\
\begin{equation}
\mathcal{A} = \sum_{n=0}^{N-1} c_n U_{\omega_\lambda}^n
\end{equation}
where $c_n = \langle U_{\omega_\lambda}^{n},\mathcal{A}\rangle = \frac{1}{N}{\rm Tr}(\mathcal{A} \;U_{\omega_\lambda}^{-n})$.
\end{lemma}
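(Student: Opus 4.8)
The plan is to reduce the statement to the elementary orthogonality relations among $N$-th roots of unity, which is exactly the content of the (inverse) discrete Fourier transform. First I would record that, because $U_{\omega_\lambda}$ is diagonal in the basis $\{\ket{\lambda_j}\}$, its powers are $U_{\omega_\lambda}^{n} = \sum_{j=0}^{N-1}\omega_N^{jn}\ket{\lambda_j}\bra{\lambda_j}$ and $U_{\omega_\lambda}^{-n} = (U_{\omega_\lambda}^{n})^{\dagger}$. Hence the family $\{U_{\omega_\lambda}^{n}\}_{n=0}^{N-1}$ lies inside the $N$-dimensional space $\mathcal{D}$ of operators that are diagonal in $\{\ket{\lambda_j}\}$, i.e.\ the space of normal operators with this eigenbasis, and the given matrix $\mathcal{A}$ belongs to $\mathcal{D}$.

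Next I would equip $\mathcal{D}$ with the normalized Hilbert--Schmidt inner product $\langle X, Y\rangle := \tfrac1N\,{\rm Tr}(X^{\dagger}Y)$ and check orthonormality of the family:
\begin{equation}
\langle U_{\omega_\lambda}^{m}, U_{\omega_\lambda}^{n}\rangle = \frac1N\,{\rm Tr}\!\left(U_{\omega_\lambda}^{n-m}\right) = \frac1N\sum_{j=0}^{N-1}\omega_N^{(n-m)j} = \delta_{mn},
\end{equation}
where the last equality uses the geometric-sum identity $\sum_{j=0}^{N-1}\omega_N^{kj} = N$ if $k$ is a multiple of $N$ and $0$ otherwise, together with the range restriction $|n-m| < N$. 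Since $\dim\mathcal{D} = N$ and we have produced $N$ orthonormal elements of $\mathcal{D}$, the set $\{U_{\omega_\lambda}^{n}\}_{n=0}^{N-1}$ is an orthonormal basis of $\mathcal{D}$, so every $\mathcal{A}\in\mathcal{D}$ expands as $\mathcal{A} = \sum_{n=0}^{N-1}\langle U_{\omega_\lambda}^{n}, \mathcal{A}\rangle\,U_{\omega_\lambda}^{n}$.

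The final step is just to identify the coefficient explicitly: from $\mathcal{A}\,U_{\omega_\lambda}^{-n} = \sum_{j}\lambda_j\,\omega_N^{-jn}\ket{\lambda_j}\bra{\lambda_j}$ one reads off $\langle U_{\omega_\lambda}^{n}, \mathcal{A}\rangle = \tfrac1N\,{\rm Tr}(U_{\omega_\lambda}^{-n}\mathcal{A}) = \tfrac1N\,{\rm Tr}(\mathcal{A}\,U_{\omega_\lambda}^{-n}) = \tfrac1N\sum_{j}\lambda_j\,\omega_N^{-jn}$, matching the asserted formula for $c_n$. (Alternatively one can bypass the abstract basis argument and verify directly that with $c_n := \tfrac1N\sum_{j}\lambda_j\omega_N^{-jn}$ one has $\sum_n c_n U_{\omega_\lambda}^{n} = \sum_{j}\big(\tfrac1N\sum_{n}\sum_{j'}\lambda_{j'}\omega_N^{(j-j')n}\big)\ket{\lambda_j}\bra{\lambda_j} = \sum_{j}\lambda_j\ket{\lambda_j}\bra{\lambda_j} = \mathcal{A}$, again by root-of-unity orthogonality.) I do not expect any genuine obstacle: the only points requiring care are fixing the normalization convention in the inner product and invoking the restriction $0\le m,n\le N-1$ that collapses the root-of-unity sums to Kronecker deltas.
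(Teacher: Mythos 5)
Your proof is correct and rests on exactly the same ingredient as the paper's: the orthogonality of $\{U_{\omega_\lambda}^n\}_{n=0}^{N-1}$ under the normalized Hilbert--Schmidt inner product, which reduces to the root-of-unity sum $\tfrac1N\sum_j\omega_N^{(n-m)j}=\delta_{mn}$. The only cosmetic difference is that you package the conclusion as a dimension-count/orthonormal-basis argument before identifying $c_n$, whereas the paper goes straight to the direct verification $\sum_n c_n U_{\omega_\lambda}^n=\mathcal{A}$ — which you also give as your parenthetical alternative — so the two proofs are essentially identical.
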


\begin{proof}
First, note that the inner product between two powers of $U_{\omega_\lambda}$ satisfies the orthogonality property by the properties of Fourier sums:
\begin{equation}
    \langle U_{\omega_\lambda}^n, U_{\omega}^m\rangle = (\sum_{j} \omega^{j(n-m)}_N )/N = \delta_{m,n}.
\end{equation}
This implies that:
\begin{align}
    \sum_{n} c_n U_{\omega_\lambda}^n = N^{-1}\sum_{n,j}\lambda_j \omega_{N}^{-nj} \sum_k \omega_N^{kn} |\lambda_{k}\rangle\!\langle \lambda_{k}|= \sum_{jk} \delta_{jk} \lambda_j |\lambda_k\rangle\!\langle \lambda_k |=\mathcal{A}.
\end{align}
Thus the powers of the matrix $U_{\omega_\lambda}$ form a complete operator basis for normal matrices in the $\lbrace \vert \lambda_j\rangle \rbrace_{j=0}^{N-1}$ basis.  As the basis set is orthonormal in this space, it, therefore, is a complete orthonormal operator basis as claimed.
\end{proof}
Building upon the foundational results established in Lemma~\ref{lem:FourierDecompositionOfMatrices}, we are now equipped to introduce a framework for synthesizing normal matrices. We first show this in the computational basis for implementing diagonal matrices, and then extending the result to arbitrary bases. In particular we will demonstrate how we can utilize this framework in the quantum Fourier basis to implement convolution operators. 
\subsection{Synthesizing Diagonal Matrices}
In the computational basis, the construction of the root of unity matrix $U_{\omega_\lambda}$ is straightforward and can be achieved using $O(\log(N))$ many single-qubit gates, providing an efficient method to build the operator. This method's efficiency highlights the computational advantages of our framework and further establishes the value of our approach in the context of quantum information processing. From this point forward, we will denote $U_{\omega}$ as shorthand for $U_{\omega_\lambda}$ in the computational basis. In mathematical terms, this gives us:
\begin{equation}
U_{\omega}=\sum_{j=0}^{N-1}\omega_N^j |j\rangle\langle j|
\end{equation}
Where $|j\rangle$ is the binary bitstring representation of integer $j$. This succinct notation allows us to concisely express the operator in terms of the computational basis states. Lemma~\ref{lem:FourierDecompositionOfMatrices} leads to the following theorem for the synthesis of diagonal matrices.
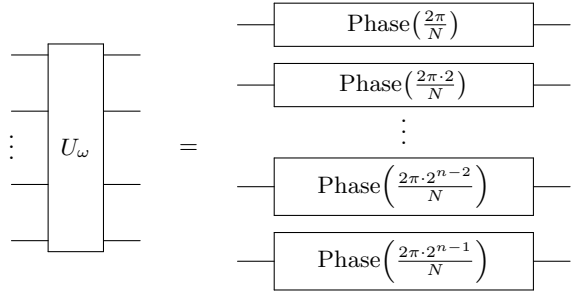
\begin{figure}[t!]
\label{fig:RootUnityComputational}
\centering
\[
\begin{array}{c}
\Qcircuit @C=1.5em @R=.7em {
    & & \multigate{7}{U_{\omega}} & \qw \\
    &  & &\\
    & & \ghost{U_{\omega}} & \qw \\
    & \vdots & \\
    &  & &\\
    & & \ghost{U_{\omega}} & \qw \\
    &  & &\\
    & & \ghost{U_{\omega}} & \qw
}
\end{array}
\quad = \quad
\begin{array}{c}
\Qcircuit @C=1.5em @R=.7em {
    & \gate{\makebox[10em]{Phase$\left(\frac{2\pi}{N}\right)$}} & \qw \\
    & \gate{\makebox[10em]{Phase$\left(\frac{2\pi \cdot 2}{N}\right)$}} & \qw \\
    & \vdots & \\
    &  & \\
    & \gate{\makebox[10em]{Phase$\left(\frac{2\pi \cdot 2^{n-2}}{N}\right)$}} & \qw \\
    & \gate{\makebox[10em]{Phase$\left(\frac{2\pi \cdot 2^{n-1}}{N}\right)$}} & \qw
}
\end{array}
\]
\caption{Quantum circuit implementing $U_{\omega}$ for a system of $n = \log N$ qubits. On the left, a multi-qubit gate $U_{\omega}$ is represented. This gate is equivalent to the application of phase gates on each qubit, as shown on the right. The phase angle doubles with each increase in the qubit's index. Here, the phase gate is defined as $\text{Phase}(\theta) = \begin{bmatrix} 1 & 0 \\ 0 & e^{i\theta} \end{bmatrix}$.}
\end{figure}

\begin{theorem}\label{thm:ArbitraryDiagonalMatrices}
    Given an $N \times N$ diagonal matrix $\mathcal{A} = P(U_\omega)$, where $\text{deg}(P) = d$, and $|P|^2\leq c$ on $\mathbb{T}$, we can implement $\frac{\mathcal{A}}{\sqrt{c}}$ using $\mathcal{O}(d\log N)$ 1 and 2-qubit gates.
\end{theorem}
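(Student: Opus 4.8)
The plan is to combine the Fourier decomposition of diagonal matrices from Lemma~\ref{lem:FourierDecompositionOfMatrices} with the block-encoding machinery of Corollary~\ref{cor:PossiblePs}. First I would note that $\mathcal{A}/\sqrt{c} = (P/\sqrt{c})(U_\omega)$ where the rescaled polynomial $\tilde{P} = P/\sqrt{c}$ satisfies $|\tilde{P}|^2 \leq 1$ on the unit circle $\mathbb{T}$, since every eigenvalue of $U_\omega$ is a root of unity and hence lies on $\mathbb{T}$. By Corollary~\ref{cor:PossiblePs}, there exist angles $\vec\theta,\vec\phi \in \mathbb{R}^{d+1}$ and $\lambda \in \mathbb{R}$ so that the GQSP circuit of Figure~\ref{fig:GQSP_circuit}, built from the signal operator $A = \ket{0}\!\bra{0}\otimes U_\omega + \ket{1}\!\bra{1}\otimes I$ and $d+1$ single-qubit rotations $R(\theta_j,\phi_j,\cdot)$, block-encodes $\tilde P(U_\omega) = \mathcal{A}/\sqrt c$ in the top-left block using a single ancilla qubit.

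The second step is the gate-count bookkeeping. The GQSP circuit uses exactly $d$ applications of the controlled signal operator $A$ and $d+1$ single-qubit $\mathrm{SU}(2)$ rotations. Each rotation is a single $2$-qubit (in fact $1$-qubit) gate, contributing $\mathcal{O}(d)$ gates. The cost of $A$ is the cost of a controlled-$U_\omega$. Here I would invoke the circuit in Figure~\ref{fig:RootUnityComputational}: $U_\omega = \sum_{j=0}^{N-1}\omega_N^j\ket{j}\!\bra{j}$ factorizes as a tensor product of single-qubit phase gates $\mathrm{Phase}(2\pi \cdot 2^k/N)$ across the $n = \log N$ qubits, so $U_\omega$ costs $\mathcal{O}(\log N)$ single-qubit gates, and a controlled-$U_\omega$ costs $\mathcal{O}(\log N)$ $2$-qubit controlled-phase gates. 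Multiplying, the $d$ invocations of $A$ cost $\mathcal{O}(d\log N)$ gates, which dominates the $\mathcal{O}(d)$ rotations, giving the claimed $\mathcal{O}(d\log N)$ total. I would also remark that since $U_\omega$ is diagonal in the computational basis, so is every power $U_\omega^n$ and hence so is $P(U_\omega)$, so the top-left block really is the diagonal matrix $\mathcal{A}/\sqrt c$ as an operator on the $n$-qubit register, consistent with the statement.

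The main obstacle — really the only subtlety — is making precise what ``implement $\mathcal{A}/\sqrt c$'' means when $\mathcal{A}/\sqrt c$ is sub-unitary: the GQSP circuit produces a unitary on $n+1$ qubits whose $\ket{0}\!\bra{0}$ block equals $\mathcal{A}/\sqrt c$, i.e. a block-encoding, not a direct unitary synthesis. One should state that the construction yields a block-encoding (with the ancilla post-selected on $\ket 0$), and that when $|P|^2 = c$ identically on $\mathbb{T}$ — e.g. when one is free to choose $c = \max_{\mathbb{T}}|P|^2$ and $P$ happens to have constant modulus, or after an amplitude-amplification step — the encoded operator is itself unitary. A secondary point worth checking is that Corollary~\ref{cor:PossiblePs} is stated for polynomials in an abstract unitary $U$; applying it with $U \mapsto U_\omega$ is legitimate because the corollary's hypothesis only constrains $|P|$ on $\mathbb{T}$, which contains the spectrum of any unitary, and the resulting circuit identity is an operator identity that holds verbatim upon substitution. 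Everything else is routine.
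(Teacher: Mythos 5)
Your proof is correct and follows essentially the same route as the paper: rescale $P$ so that $|P/\sqrt c|^2 \le 1$ on $\mathbb{T}$, invoke Corollary~\ref{cor:PossiblePs} to obtain a GQSP block-encoding of $\tilde P(U_\omega)$ with $d$ controlled-$U_\omega$ calls, and observe that controlled-$U_\omega$ costs $\mathcal{O}(\log N)$ two-qubit controlled-phase gates. Your additional remarks -- that ``implement'' here means block-encoding rather than unitary synthesis, and that the paper's invocation of Lemma~\ref{lem:FourierDecompositionOfMatrices} is not strictly needed since $\mathcal{A}=P(U_\omega)$ is already given in polynomial form -- are accurate clarifications, not departures from the paper's argument.
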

\begin{proof}
    By Lemma ~\ref{lem:FourierDecompositionOfMatrices}, we know that $\mathcal{A}$ can be written as:
    \begin{equation}
        \mathcal{A} = \sum_{n=0}^{N-1} \alpha_n U_{\omega}^n
    \end{equation}
    If $|P|^2\leq c$ on $\mathbb{T}$, then $|\frac{P}{\sqrt{c}}|^2 \leq 1$ on $\mathbb{T}$. Thus, as shown by Corollary~\ref{cor:PossiblePs}, it's possible to find specific parameters $\vec{\theta}, \vec{\phi} \in \mathbb{R}^{d+1}$ and $\lambda \in \mathbb{R}$ to plug into our Generalized Quantum Signal Processing framework, such that it results in:\\
    \begin{equation} 
    \begin{bmatrix}
    \frac{\mathcal{A}}{\sqrt{c}} & .\>\>\>  \\
    .\>\>\> & .\>\>\> \\
    \end{bmatrix}
    \end{equation} 
    Therefore, we can implement $\frac{\mathcal{A}}{\sqrt{c}}$ using $\mathcal{O}(d)$ applications of controlled $U_\omega$. And as $U_\omega$ can be implemented using $O(\log(N))$ many single-qubit gates, it follows that $\frac{\mathcal{A}}{\sqrt{c}}$ can be implemented using $\mathcal{O}(d\log N)$ 1 and 2-qubit gates.
\end{proof}
This theorem paves the way for an ensuing corollary, showcasing a specific application of the above to the construction of Bit Functions. This procedure can be seen as a generalization of methods developed in~\cite{welch2014efficient}, where only implementations of unitary Bit Functions are considered.
\begin{corollary}[Bit Functions]\label{cor:BitFunctions}
    Given a function $f(x) = \sum_{n=0}^{d} \alpha_n e^{\frac{ix2\pi}{N}}$ where $|f|^2\leq 1$, we can implement the $N \times N$ diagonal matrix $\mathcal{A}$ such that:
    \begin{equation}
    \mathcal{A}|x\rangle = f(x)|x\rangle
    \end{equation}
    using $\mathcal{O}(d\log N)$ 1 and 2-qubit gates.
\end{corollary}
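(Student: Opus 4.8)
The plan is to derive Corollary~\ref{cor:BitFunctions} directly from Theorem~\ref{thm:ArbitraryDiagonalMatrices} by recognizing the function $f$ as a polynomial of the root-of-unity matrix $U_\omega$. First I would observe that $e^{ix2\pi/N} = \omega_N^x$, so the function $f(x) = \sum_{n=0}^d \alpha_n e^{ixn2\pi/N} = \sum_{n=0}^d \alpha_n \omega_N^{xn}$ is exactly the value of the polynomial $P(z) = \sum_{n=0}^d \alpha_n z^n$ evaluated at $z = \omega_N^x$. (Note the statement as written has a typo omitting the $n$ in the exponent; the intended reading is $f(x) = \sum_{n=0}^d \alpha_n e^{ixn2\pi/N}$, matching the polynomial structure used elsewhere in the section.)

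Next I would define the diagonal matrix $\mathcal{A} = P(U_\omega) = \sum_{j=0}^{N-1} P(\omega_N^j) \vert j\rangle\langle j\vert$. Since $U_\omega \vert x\rangle = \omega_N^x \vert x\rangle$, polynomial functional calculus gives $\mathcal{A}\vert x\rangle = P(\omega_N^x)\vert x\rangle = f(x)\vert x\rangle$, which is precisely the desired action. The degree of $P$ is $d$ by construction.

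Then I would verify the hypotheses of Theorem~\ref{thm:ArbitraryDiagonalMatrices} with $c = 1$: we need $|P|^2 \leq 1$ on the unit circle $\mathbb{T}$. Since every eigenvalue $\omega_N^j$ of $U_\omega$ lies on $\mathbb{T}$ and $P(\omega_N^j) = f(j)$, the assumption $|f|^2 \leq 1$ gives $|P(\omega_N^j)|^2 \leq 1$ at all these points; more strongly, interpreting the hypothesis $|f|^2\le 1$ as holding for all real arguments (equivalently $|P(z)|^2\le 1$ for all $z\in\mathbb{T}$), the condition of Theorem~\ref{thm:ArbitraryDiagonalMatrices} is satisfied outright with $c=1$. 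Applying that theorem yields an implementation of $\mathcal{A}/\sqrt{c} = \mathcal{A}$ using $\mathcal{O}(d\log N)$ one- and two-qubit gates, which is the claimed bound.

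The main obstacle here is not technical depth but a bookkeeping subtlety: ensuring that the norm bound $|f|^2 \le 1$ on the integer sample points $x \in \{0,\dots,N-1\}$ genuinely translates to the bound $|P|^2 \le 1$ on the continuous circle $\mathbb{T}$ required by Theorem~\ref{thm:ArbitraryDiagonalMatrices}. These are equivalent only if one reads $f$ as a function on $\mathbb{R}$ (or $\mathbb{T}$) rather than merely on $\mathbb{Z}_N$; the corollary implicitly adopts this stronger reading, and it is worth stating explicitly. Everything else is a routine specialization — the heavy lifting (Corollary~\ref{cor:PossiblePs} for the GQSP construction and the $\mathcal{O}(\log N)$-gate circuit for $U_\omega$ in Figure~\ref{fig:RootUnityComputational}) is already packaged inside Theorem~\ref{thm:ArbitraryDiagonalMatrices}.
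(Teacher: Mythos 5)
Your proof is correct and is precisely the specialization the paper intends: the paper states Corollary~\ref{cor:BitFunctions} as an immediate consequence of Theorem~\ref{thm:ArbitraryDiagonalMatrices} (taking $P(z)=\sum_{n=0}^d\alpha_n z^n$ and $c=1$) and gives no separate proof. You were also right to flag the missing $n$ in the exponent of the statement and to note that the hypothesis $|f|^2\le 1$ must be read as a bound over all of $\mathbb{T}$ (equivalently, $f$ as a function of a real argument) rather than only at the integer sample points, since the latter alone would not satisfy the hypothesis of Theorem~\ref{thm:ArbitraryDiagonalMatrices}.
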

Upon establishing the feasibility of synthesizing diagonal matrices using a Fourier decomposition into polynomials of $U_{\omega}$, we expand our perspective beyond this initial construct. In what follows, we show that the construction of normal matrices using our method isn't confined solely to the computational basis. Indeed, given a change of basis matrix, the same principle can be effectively used to synthesize matrices in any other bases.
\begin{theorem}\label{thm:ArbitraryNormalMatrices}
    Given an $N \times N$ normal matrix $\mathcal{A} = P(U_{\omega_\lambda})=\sum_{n=0}^{d} \alpha_n U_{\omega_\lambda}^n$, diagonalized by unitary matrices $\mathcal{Q}$ and $\mathcal{Q}^{\dag}$, where implementing $\mathcal{Q}$ requires $\mathcal{O}(\chi)$ 1 and 2-qubit gates, $\text{deg}(P) = d$, and $|P|^2\leq c$ on $\mathbb{T}$, we can implement $\frac{\mathcal{A}}{\sqrt{c}}$ using $\mathcal{O}(d\log N + \chi)$ 1 and 2-qubit gates.
\end{theorem}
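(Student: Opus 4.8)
\emph{Proof proposal.} The plan is to reduce the general-basis statement to the computational-basis case already settled in Theorem~\ref{thm:ArbitraryDiagonalMatrices}, by conjugating with the diagonalizing unitary $\mathcal{Q}$. The key observation is that if $\mathcal{Q}=\sum_{j=0}^{N-1}\vert\lambda_j\rangle\langle j\vert$ is the change of basis sending the computational basis to the eigenbasis $\{\vert\lambda_j\rangle\}$, then $U_{\omega_\lambda}=\mathcal{Q}\,U_\omega\,\mathcal{Q}^\dagger$, since both sides act as multiplication by $\omega_N^{\,j}$ on $\vert\lambda_j\rangle$. Hence $U_{\omega_\lambda}^n=\mathcal{Q}\,U_\omega^n\,\mathcal{Q}^\dagger$ for every $n$, and so $\mathcal{A}=P(U_{\omega_\lambda})=\mathcal{Q}\,P(U_\omega)\,\mathcal{Q}^\dagger$. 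In particular $\mathcal{A}/\sqrt c=\mathcal{Q}\bigl(P(U_\omega)/\sqrt c\bigr)\mathcal{Q}^\dagger$, where $P(U_\omega)$ is diagonal in the computational basis with $\vert P\vert^2\le c$ on $\mathbb{T}$, which is exactly the hypothesis of Theorem~\ref{thm:ArbitraryDiagonalMatrices}.

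Next I would invoke Theorem~\ref{thm:ArbitraryDiagonalMatrices} to obtain a circuit $V$ on the $n=\log N$-qubit register together with a single ancilla qubit such that $(\langle 0\vert\otimes I)\,V\,(\vert 0\rangle\otimes I)=P(U_\omega)/\sqrt c$, using $\mathcal{O}(d\log N)$ one- and two-qubit gates. I would then form $V':=(I_{\text{anc}}\otimes\mathcal{Q})\,V\,(I_{\text{anc}}\otimes\mathcal{Q}^\dagger)$, which applies $\mathcal{Q}$ on the system register before and after $V$ while leaving the ancilla untouched. Since the ancilla projector $\vert 0\rangle\langle 0\vert$ commutes with $I_{\text{anc}}\otimes\mathcal{Q}$, the top-left block of $V'$ is $(\langle 0\vert\otimes I)\,V'\,(\vert 0\rangle\otimes I)=\mathcal{Q}\bigl(P(U_\omega)/\sqrt c\bigr)\mathcal{Q}^\dagger=\mathcal{A}/\sqrt c$, so $V'$ block-encodes $\mathcal{A}/\sqrt c$ as claimed.

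Finally I would tally the gate count: $V$ costs $\mathcal{O}(d\log N)$ by Theorem~\ref{thm:ArbitraryDiagonalMatrices}, and the two added copies of $\mathcal{Q}$ (and $\mathcal{Q}^\dagger$) cost $\mathcal{O}(\chi)$ each by hypothesis, for a total of $\mathcal{O}(d\log N+\chi)$. Equivalently one can push the conjugation inside the GQSP circuit underlying Theorem~\ref{thm:ArbitraryDiagonalMatrices}: inserting $\mathcal{Q}^\dagger\mathcal{Q}$ between successive controlled-$U_\omega$ blocks converts each into a controlled-$U_{\omega_\lambda}=\mathcal{Q}U_\omega\mathcal{Q}^\dagger$, and because the single-qubit ancilla rotations commute with $I\otimes\mathcal{Q}$, all internal copies telescope, leaving just one outer $\mathcal{Q}$ and one outer $\mathcal{Q}^\dagger$; the count is unchanged.

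I do not expect a genuine obstacle here: the two points needing care are the identity $U_{\omega_\lambda}=\mathcal{Q}U_\omega\mathcal{Q}^\dagger$ and the standard fact that conjugating a block-encoding on the encoded register by a unitary produces a block-encoding of the conjugated operator, so that the single-ancilla overhead is preserved. The proof is thus essentially a one-line reduction to Theorem~\ref{thm:ArbitraryDiagonalMatrices} plus the gate-count bookkeeping, with the decomposition cost of $\mathcal{Q}$ treated as the black-box input $\chi$.
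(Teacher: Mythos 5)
Your proposal is correct and takes essentially the same approach as the paper: build the block-encoding of $P(U_\omega)/\sqrt{c}$ via Theorem~\ref{thm:ArbitraryDiagonalMatrices} and then conjugate the system register by the change-of-basis unitary, giving the stated $\mathcal{O}(d\log N + \chi)$ count. The only difference is a harmless swap of the roles of $\mathcal{Q}$ and $\mathcal{Q}^\dagger$ (the paper writes $U_{\omega_\lambda}=\mathcal{Q}^\dagger U_\omega\mathcal{Q}$ whereas you write $U_{\omega_\lambda}=\mathcal{Q}U_\omega\mathcal{Q}^\dagger$), and you make the block-encoding bookkeeping slightly more explicit than the paper does.
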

\begin{proof}
Utilizing Theorem~\ref{thm:ArbitraryDiagonalMatrices} we first build the polynomial in the computational basis:
\begin{equation}
    \frac{1}{\sqrt{c}}\sum_{n=0}^{d} \alpha_n U_{\omega}^n
\end{equation}
using $\mathcal{O}(d\log N)$ 1 and 2-qubit gates, and then apply the change of basis matrices to get:
\begin{equation}
    \frac{1}{\sqrt{c}}\mathcal{Q}^{\dag}(\sum_{n=0}^{d} \alpha_n U_{\omega}^n)\mathcal{Q}
\end{equation}
\begin{equation}
    = \frac{1}{\sqrt{c}}\sum_{n=0}^{d} \alpha_n \mathcal{Q}^{\dag}U_{\omega}^n\mathcal{Q}
\end{equation}
\begin{equation}
    = \frac{1}{\sqrt{c}}\sum_{n=0}^{d} \alpha_n U_{\omega_\lambda}^n=\frac{\mathcal{A}}{\sqrt{c}}
\end{equation}
Then since implementing $\mathcal{Q}$ requires $\mathcal{O}(\chi)$ 1 and 2-qubit gates, our total cost will be $\mathcal{O}(d\log N + \chi)$.
\end{proof}
\noindent We now give an example of this in the Fourier basis (i.e. $\mathcal{Q} = QFT$) to build convolution operators.
\subsection{Synthesizing Convolution Matrices}
In this subsection, we explore the synthesis of convolution matrices, which play a critical role in signal processing, image filtering, and numerous other applications.  Recent work has considered the application of linear combinations of unitaries to implement circulant matrices~\cite{zhou2017efficient} but QSP based methods have not yet been fully explored.  Our approach to synthesizing circulant matrices is based on GQSP and uses, in particular, the relationship between these matrices and circular convolutions. Specifically, a circulant matrix can be completely defined by a single vector, $\vec{c}$. This vector forms the first column of the matrix, and the remaining columns are each cyclic permutations of $\vec{c}$, each with an offset equal to the respective column index. A circulant matrix is shown below:
\begin{equation}
    \begin{bmatrix}
        c_0      & c_{N-1} & \cdots  & c_2     & c_1     \\
        c_1      & c_0     & c_{N-1} &         & c_2     \\
        \vdots   & c_1     & c_0     & \ddots  & \vdots  \\
        c_{N-2}  &         & \ddots  & \ddots  & c_{N-1} \\
        c_{N-1}  & c_{N-2} & \cdots  & c_1     & c_0     \\
    \end{bmatrix}
\end{equation}
For a circulant matrix $C$, given by the vector $\vert c\rangle = \sum_{j = 0}^{N-1} c_j \vert j\rangle$, and another vector $| \psi \rangle = \sum_{j = 0}^{N-1} x_j | j\rangle$, we can perform a convolution of $\vert \psi\rangle$ by $\vert c\rangle$ simply by multiplying $\vert \psi\rangle$ with the matrix $C$. This operation is critical for many signal processing applications:
\begin{equation}
    C\vert \psi \rangle = \sum_{j = 0}^{N-1} (\psi*c)_j \vert j\rangle
\end{equation}
The resulting $(\psi*c)_j$ of the convolution is defined as:
\begin{equation}
    (\psi*c)_j = \sum_{k = 0}^{N-1} c_k x_{[j-k \>\>mod\>\> N]}
\end{equation}
A well-known characteristic of circulant matrices is that they can be defined by an associated polynomial of the cyclic permutation matrix $\mathcal{P}$. This polynomial association is particularly beneficial when constructing circuits for quantum operations, as it allows for a straightforward definition and synthesis of the required matrix:
\begin{equation}\label{eq:CirculantPoly}
    C = c_0 I + c_1 \mathcal{P} + c_2 \mathcal{P}^2 + \dots + c_{n-1} \mathcal{P}^{n-1}
\end{equation}
The cyclic permutation matrix $\mathcal{P}$ is defined as:
\begin{equation}
\mathcal{P} = \begin{bmatrix}
 0&0&\cdots&0&1\\
 1&0&\cdots&0&0\\
 0&1&\ddots&\vdots&\vdots\\
 \vdots&\ddots&\ddots&0&0\\
 0&\cdots&0&1&0
\end{bmatrix}
\end{equation}
We can equivalently express the cyclic permutation matrix $\mathcal{P}$ as:
\begin{equation}\label{eq:ConvP}
    \mathcal{P} = \sum_{j = 0}^{N-1} \vert j+1 \>\>{\rm mod}\>\> N\rangle\langle j\vert
\end{equation}
The operator $\mathcal{P}$ is a cyclic adder which can be diagonalized using the Quantum Fourier Transform (QFT):
\begin{equation}
    \mathcal{P} = QFT^{\dag}U_{\omega}QFT
\end{equation}
Then by Theorem ~\ref{thm:ArbitraryNormalMatrices}, we get the following lemma which provides bounds on the circuit size needed to construct the circulant matrix in a $1-$ and $2-$qubit gate library.  Note that the dependence on an error target $\epsilon$ is absent here because of the assumed form of the polynomial decomposition and also because rotation synthesis is not needed in this (continuous) gate set.
\begin{lemma}\label{lem::Circulant}
\noindent Given an $N \times N$ circulant matrix $C = \sum_{n=0}^d c_n \mathcal{P}^n$, we can build $C$ (normalized) using only $\mathcal{O}(d \log{N} + \log^2N)$ 1 and 2-qubit gates.
\end{lemma}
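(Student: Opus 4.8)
The plan is to recognize the circulant matrix $C$ as a polynomial of the root-of-unity matrix conjugated into the Fourier basis, and then invoke Theorem~\ref{thm:ArbitraryNormalMatrices} essentially verbatim. First I would use the diagonalization $\mathcal{P} = QFT^{\dagger} U_{\omega} QFT$ already recorded above, which immediately gives $\mathcal{P}^n = QFT^{\dagger} U_{\omega}^n QFT$ and hence
\begin{equation}
C = \sum_{n=0}^d c_n \mathcal{P}^n = QFT^{\dagger}\left(\sum_{n=0}^d c_n U_{\omega}^n\right) QFT = QFT^{\dagger} P(U_{\omega}) QFT,
\end{equation}
where $P(x) = \sum_{n=0}^d c_n x^n$ is an ordinary degree-$d$ polynomial. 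Thus $C$ is exactly a normal matrix of the form treated by Theorem~\ref{thm:ArbitraryNormalMatrices}, with the diagonalizing change of basis taken to be $\mathcal{Q} = QFT$ (so that $U_{\omega_\lambda} = QFT^{\dagger} U_{\omega} QFT = \mathcal{P}$ in the notation of that theorem).

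Next I would supply the two inputs Theorem~\ref{thm:ArbitraryNormalMatrices} needs. For the norm bound, set $c = \max_{x \in \mathbb{T}} |P(x)|^2$, which is finite and attained by compactness of $\mathbb{T}$ and continuity of $P$; then $|P|^2 \le c$ on $\mathbb{T}$ by construction, and "building $C$ normalized" means building $C/\sqrt{c}$. For the cost $\chi$ of the change of basis, I would invoke the standard fact that the quantum Fourier transform on $n = \log N$ qubits can be realized with $\mathcal{O}(n^2) = \mathcal{O}(\log^2 N)$ one- and two-qubit gates (Hadamards and controlled phase rotations), so $\chi = \mathcal{O}(\log^2 N)$.

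Feeding these into Theorem~\ref{thm:ArbitraryNormalMatrices} then yields that $C/\sqrt{c}$ is implementable with $\mathcal{O}(d \log N + \chi) = \mathcal{O}(d \log N + \log^2 N)$ one- and two-qubit gates, which is the claim. There is no substantive obstacle here: the lemma is a direct specialization of Theorem~\ref{thm:ArbitraryNormalMatrices}, and the only points requiring any care are (i) observing that conjugation by $QFT$ carries $\mathcal{P}$ to $U_{\omega}$ so that the polynomial structure $\sum_n c_n(\cdot)^n$ is preserved, and (ii) recalling the $\mathcal{O}(\log^2 N)$ gate complexity of the QFT. As a minor remark consistent with the comment preceding the lemma, one may also note that $c$ depends only on the fixed filter coefficients $\{c_n\}$, not on $N$ or on any accuracy parameter, which is why no $\epsilon$-dependence enters the gate count.
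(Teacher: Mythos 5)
Your proposal is correct and is essentially the paper's own argument: the paper simply states that the lemma follows from Theorem~\ref{thm:ArbitraryNormalMatrices} via the diagonalization $\mathcal{P} = QFT^{\dag}U_{\omega}QFT$, and you have spelled out exactly that specialization, supplying the $\mathcal{O}(\log^2 N)$ QFT gate count for $\chi$ and the normalization constant $c$ just as intended.
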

This lemma enables us to reach a valuable conclusion:
\begin{theorem}\label{col:Convolution}

\noindent Given $\vert \psi \rangle = \sum_{j = 0}^{N-1} x_j \vert j\rangle$ and a filter $F = \{a_k\}_{k=-d}^d$, we can convolve $\psi$ with $F$:
\begin{equation}
\vert \psi * F \rangle = \sum_{m = 0}^{N-1} (\psi * F)_m \vert m\rangle, \>\>\>\>\>\> \mbox{where} \>\>\>\>
(\psi * F)_m = \sum_{k = -d}^{d} a_k x_{[m-k \>\>mod\>\> N]}
\end{equation}
using only $O(d \log{N} + \log^2N)$ 1 and 2-qubit gates.
\end{theorem}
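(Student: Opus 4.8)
The plan is to recognize the filtered vector as the image of $\ket{\psi}$ under a circulant matrix and then reduce the two-sided filter to the one-sided circulant already handled by Lemma~\ref{lem::Circulant}. Reading $\mathcal{P}^{-1}$ as $\mathcal{P}^{N-1}$ (recall $\mathcal{P}^{N}=I$), the definition $(\psi * F)_m = \sum_{k=-d}^{d} a_k\, x_{[m-k\bmod N]}$ is exactly the $m$-th component of $C\ket{\psi}$, where $C = \sum_{k=-d}^{d} a_k \mathcal{P}^{k}$ and $\mathcal{P}$ is the cyclic shift of~\eqref{eq:ConvP}. So it suffices to build a (sub-normalized) block encoding of $C$ and apply it to $\ket{\psi}$; post-selecting the ancilla on $\ket{0}$ then returns $\ket{\psi * F}$ up to a normalization constant (removable by amplitude amplification if desired).

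The key move is to pull out the lowest power so as not to inflate the polynomial degree. Writing
\begin{equation}
C \;=\; \mathcal{P}^{-d}\Bigl(\textstyle\sum_{n=0}^{2d} a_{n-d}\,\mathcal{P}^{n}\Bigr) \;=\; \mathcal{P}^{-d}\,\widetilde{C},
\end{equation}
the factor $\widetilde{C} = \sum_{n=0}^{2d} a_{n-d}\mathcal{P}^{n}$ is an honest circulant matrix associated to a degree-$2d$ polynomial in $\mathcal{P}$ in the sense of~\eqref{eq:CirculantPoly}. By Lemma~\ref{lem::Circulant}, $\widetilde{C}$ (normalized by $\sqrt{c}$, where $c = \max_{x\in\mathbb{T}}\bigl|\sum_{n=0}^{2d} a_{n-d}x^{n}\bigr|^{2}$) is implementable with $O(2d\log N + \log^2 N) = O(d\log N + \log^2 N)$ one- and two-qubit gates. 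It then remains to left-multiply by the single cyclic shift $\mathcal{P}^{-d} = \mathcal{P}^{N-d}$; via $\mathcal{P} = QFT^{\dagger} U_{\omega} QFT$ this is $QFT^{\dagger} U_{\omega}^{N-d} QFT$, and $U_{\omega}^{N-d}$ is again just $O(\log N)$ single-qubit phase gates (as in Theorem~\ref{thm:ArbitraryDiagonalMatrices}), so this costs $O(\log^2 N)$ in total (alternatively it can be realized directly as a modular adder by $N-d$). Composing the two pieces gives a block encoding of $C/\sqrt{c}$ at cost $O(d\log N + \log^2 N)$, which on input $\ket{\psi}$ yields $\ket{\psi * F}/\sqrt{c}$.

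The only genuine subtlety --- and the reason the bound does not degrade to $O(N\log N)$ --- is that the negative filter indices must not be implemented by rewriting $\mathcal{P}^{-k}$ as $\mathcal{P}^{N-k}$ inside the polynomial. Factoring out the common $\mathcal{P}^{-d}$ caps the effective degree at $2d$; equivalently one can invoke Theorem~\ref{thm:positive_negative} directly, building $P(U_{\omega})/\sqrt{c}$ for $P(x) = \sum_{n=0}^{2d} a_{n-d}x^{n}$ via Corollary~\ref{cor:PossiblePs}, replacing $k=d$ of the signal operators $A$ by $A'$ to obtain $U_{\omega}^{-d}P(U_{\omega})/\sqrt{c} = \bigl(\sum_{k=-d}^{d} a_k U_{\omega}^{k}\bigr)/\sqrt{c}$, and conjugating by the QFT. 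Either way the final accounting is routine: $O(d)$ applications of controlled-$U_{\omega}$ or controlled-$U_{\omega}^{\dagger}$ at $O(\log N)$ gates each, $O(d)$ single-qubit GQSP rotations, and two QFTs at $O(\log^2 N)$, which sum to $O(d\log N + \log^2 N)$.
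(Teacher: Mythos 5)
Your proof is correct and, as far as one can tell, is essentially the argument the paper intends: the paper gives no explicit proof of this theorem, stating only that Lemma~\ref{lem::Circulant} ``enables us to reach a valuable conclusion.'' You supply the missing reasoning. In particular, you correctly identify the one nontrivial point the paper glosses over: a two-sided filter $\{a_k\}_{k=-d}^{d}$ cannot be fed to Lemma~\ref{lem::Circulant} by rewriting $\mathcal{P}^{-k}$ as $\mathcal{P}^{N-k}$, since that inflates the polynomial degree from $2d$ to $N-1$ and wrecks the claimed bound. Your fix --- factoring out $\mathcal{P}^{-d}$ so that $C = \mathcal{P}^{-d}\widetilde{C}$ with $\widetilde{C}$ of degree $2d$, then composing with a single extra cyclic shift, or equivalently invoking Theorem~\ref{thm:positive_negative} to replace $d$ of the signal operators $A$ by $A'$ --- is exactly the device the paper's machinery provides for this situation, and you correctly note the normalization constant $c$ is unaffected since $|x^{-d}|=1$ on $\mathbb{T}$. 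The gate accounting ($O(d)$ controlled-$U_\omega$/$U_\omega^\dagger$ at $O(\log N)$ gates each, $O(d)$ GQSP rotations, and two QFTs at $O(\log^2 N)$ each, the inner pair of QFTs cancelling) is right and gives $O(d\log N + \log^2 N)$ as claimed.
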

A powerful application of the previous result is solving systems of equations that are expressed as discrete convolutions. For instance, consider a system of equations represented as
\begin{equation}
    x * c = b
\end{equation}
Such equations are significant because we often have a known filter function that is applied to an input and we would like to have an efficient method for inverting such a transformation to find the input that is partially obscured by the convolution.  To see how this can be attained, we can see from the discrete convolution theorem that the original convolutional equation can be re-expressed as a linear system via
is equivalent to:
\begin{equation}
    Cx = b
\end{equation}
where $C$ is a circulant matrix. If $C$ is invertible, we can then build $C$ using Lemma~\ref{lem::Circulant} and invert it using the quantum matrix inversion algorithm to solve the system of equations.  The cost of doing so within error $\epsilon$ is then $\tilde{\mathcal{O}}(d^2 \kappa^2 \log^4(N) \log(1/\epsilon)))$ where $\kappa$ is the condition number of the circulant matrix $C$ using the inversion method of~\cite{childs2017quantum}.  This constitutes a potential exponential speedup for computing moments of the vector $x$ over the na\"ive method of solving for $x$ using matrix inversion and computing the mean from the result.  Our approach allows us then to directly synthesize such matrices through the polynomial series definition of circulant matrices after diagonalization through the Quantum Fourier transform.

\section{Conclusion}
This paper introduces a substantial advancement to quantum signal processing -- the Generalized Quantum Signal Processing (GQSP) method. Unlike traditional QSP frameworks, our method employs a pair of rotations instead of solely relying on either $Z-$ or $X-$rotations for signal processing operations. This strategic modification enables us to move beyond the limitations of the original QSP framework.

Another essential contribution of our GQSP method is the significant simplification it offers in the computation of phase angles compared to existing methods. In instances where both $P$ and $Q$ are known, we introduce a straightforward recursive formula for the angles. This substantial pedagogical improvement addresses one of the key challenges in teaching QSP methods, as the traditional techniques for finding the polynomial function in QSP can be difficult to convey. Our approach simplifies this complex aspect of QSP, making it much more accessible. Additionally, we introduced an efficient optimization algorithm for computing phase angles when only $P$ is known, but $Q$ is not. Our tests showed that our method can compute polynomials of degree greater than $10^7$ in under a minute, an impressive computational efficiency when compared to the $10^4$ degree polynomials that can be achieved using existing QSP approaches via the state-of-the-art techniques of \cite{dong2021efficient}.

In this paper, we explored several applications of our GQSP methodology. We presented an optimized algorithm for quantum fractional queries, along with a simplified technique for performing Hamiltonian simulation using qubitization. We proposed methods for calculating phase functions, such as the square root, and unveiled new methodologies for synthesizing circulant matrices and performing convolution operations. 

As we look forward, our work reveals several potential areas for further exploration. A primary question is how to adapt the principles of our approach to multivariable QSP~\cite{rossi2022multivariable}. Also, our focus on QSP suggests the potential of applying similar concepts to quantum singular value transformation for transforming block-encoded non-square matrices. In essence, this paper represents a significant progression in the QSP/QSVT framework. The exploration and understanding of the broad range of opportunities offered by these techniques promise to be a primary focus of research in quantum algorithms in the years to come.
\acknowledgements
NW would like to acknowledge funding for this work from Google Inc. This material is based upon work supported by the U.S. Department of Energy, Office of Science, National Quantum Information Science Research Centers, Co-design Center for Quantum Advantage (C2QA) under contract number DE-SC0012704 (PNNL FWP 76274).  We would like to thank Dominic Berry for useful feedback.

\nocite{*}

\bibliographystyle{unsrt}
\bibliography{bib.bib}% Produces the bibliography via BibTeX.

\end{document}